\documentclass[11pt]{article}

  \linespread{1}
  \usepackage{amsthm}
  \usepackage{amssymb}
  \usepackage{amsmath}
  \usepackage{amscd}
  \usepackage{eucal}
  \usepackage{mathrsfs}
  \usepackage{upgreek}
   \usepackage{dsfont}
      \usepackage{yfonts}
  \usepackage[T1]{fontenc}
  \usepackage{aurical}
  \usepackage{bbm}
  \usepackage{geometry}
\usepackage{array}
\usepackage{booktabs}
  \geometry{a4paper}
\newcommand{\otoprule}{\midrule[\heavyrulewidth]}   
\newcommand{\Tens}{\mathfrak{Tens}}
\newcommand{\CE}{\mathfrak{CE}}  
\newcommand{\E}{\mathfrak{E}}
\newcommand{\B}{\mathfrak{B}}
\newcommand{\W}{\mathfrak{W}}
\newcommand{\D}{\mathfrak{D}}
\newcommand{\F}{\mathfrak{F}}
\newcommand{\fG}{\mathfrak{G}}
\newcommand{\V}{\mathfrak{V}}
\newcommand{\BV}{\mathfrak{BV}}
\newcommand{\Nm}{\mathfrak{Nm}}
\newcommand{\BVn}{\BV_{\textrm{n}}}
\newcommand{\frakg}{\mathfrak{g}_c}
\newcommand{\frakgo}{\mathfrak{g}}
\newcommand{\Xo}{\mathfrak{X}}
\newcommand{\1}{\mathds{1}}
 %********** Spaziotempo ***************

 % manifold
 % metrica
%
%\newcommand{\D}{{D}}% insiemi causali
%
%\newcommand{\C}{\mathcal{C}} % superfici di Cauchy
 % spaziotempo di minkowski

%*********** Posets *************************

 % aperto
  % insieme degli aperti
 % Base per la topologia
 % Insieme simpliciale
 % approssimazione

%*********** Algebre locali ***************

\newcommand{\A}{\mathcal{A}} % algebre
 % funtore causale
 % funtore superselezione
\newcommand{\G}{\mathcal{G}} % gauge group
 % funtore transposizione
 % funtore superselezione
 % operatori limitati
 % spazio degli stati
% ideale
%************* Lettere greche ed altro ***********
\newcommand{\al}{\alpha}
\newcommand{\bet}{\beta}

\newcommand{\la}{\lambda}
\newcommand{\La}{\Lambda}

\newcommand{\ph}{\varphi}
%\newcommand{\Si}{\Sigma}

%************ Simboli di Categoria *************

\newcommand{\Nat}{\mathrm{Nat}}

\newcommand{\Bndl}{\mathrm{\mathbf{Bndl}}}
\newcommand{\Loc}{\mathrm{\mathbf{Loc}}} % categoria degli spaziotempo
 % categoria degli osservabili
 % categoria degli spazi di stati 
 % categoria degli spazi di stati 
 % categorie delle categorie tensoriali
\newcommand{\Vect}{\mathrm{\mathbf{Vec}}} % the category of locally convex topological vector spaces
\newcommand{\dgA}{\mathrm{\mathbf{dgA}}}%the category of differential graded algebras
\newcommand{\PgAlg}{\mathrm{\mathbf{PgAlg}}}%the category of top. graded Poisson algebras
%****************************************
\newcommand{\gt}[2]{\tilde{\textfrak{g}}^{#1#2}}

\newcommand{\sst}[1]{\scriptscriptstyle{#1}}

%*********************** simboli matematici ***********************
\newcommand{\hinv}{*^{\!\sst{-\!1}}}
\newcommand{\be}{\begin{equation}}
\newcommand{\ee}{\end{equation}}

\newcommand{\pa}{\partial} 

\newcommand{\NN}{\mathbb{N}} % Insieme numeri naturali
\newcommand{\RR}{\mathbb{R}} % Insieme numeri reali
\newcommand{\CC}{\mathbb{C}} % insieme numeri complessi
%\newcommand{\Hil}{\mathcal{H}} % generico spazio di Hilbert
 %sottoposets
 %sottoposets
\newcommand{\supp}{\mathrm{supp}}
\newcommand{\dvol}{d\mathrm{vol}_M}
 %evaluation map
\newcommand{\WF}{\mathrm{WF}} %wave front set
%distance
\newcommand{\loc}{\mathrm{loc}}
\newcommand{\ml}{\mathrm{ml}}
\newcommand{\inv}{\mathrm{inv}}

\newcommand{\pg}{\mathrm{pg}}

\newcommand{\af}{\mathrm{af}}
\newcommand{\ta}{\mathrm{ta}}
\newcommand{\gh}{\mathrm{gh}}

\newcommand{\mc}{\mathrm{mc}}
\newcommand{\ex}{\mathrm{ext}}
\newcommand{\Ci}{\mathcal{C}^\infty}

\newcommand{\Bcal}{\mathcal {B}}

\newcommand{\Dcal}{\mathcal{D}}
\newcommand{\Ecal}{\mathcal{E}}

%*********************** simboli romanici *************************
\newcommand{\Diff}{\mathrm{Diff}}
\newcommand{\im}{\mathrm{Im}}
\newcommand{\ke}{\mathrm{Ker}}
\newcommand{\Der}{\mathrm{Der}}

\newcommand{\tr}{\mathrm{tr}}

\newcommand{\alte}{\mathrm{alt}}
\newcommand{\X}{\mathfrak{X}}
%----------------------- TITOLO---------------------------

 \author{\null\\ Klaus Fredenhagen, Katarzyna Rejzner \\
  \null\\
  \null\\
        \small{ II Inst. f. Theoretische Physik, Universit\"at Hamburg,}\\
    \small{Luruper Chaussee 149, D-22761 Hamburg, Germany}\\ 
\small{\texttt{klaus.fredenhagen@desy.de,katarzyna.rejzner@desy.de}}}

  \title{Batalin-Vilkovisky formalism in the functional approach to classical field theory}

%\date{}

%%*********************************************************************

 \begin{document}

 \maketitle

\begin{abstract}
We develop the Batalin-Vilkovisky formalism for classical field theory on generic globally hyperbolic spacetimes. A crucial aspect of our treatment is the incorporation of the principle of local covariance which amounts to formulate the theory without reference to a distinguished spacetime. In particular, this allows a homological construction of the Poisson algebra of observables in classical gravity. Our methods heavily rely on the differential geometry of configuration spaces of classical fields.
\end{abstract}

  \theoremstyle{plain}
  \newtheorem{df}{Definition}[section]
  \newtheorem{thm}[df]{Theorem}
  \newtheorem{prop}[df]{Proposition}
  \newtheorem{cor}[df]{Corollary}
  \newtheorem{lemma}[df]{Lemma}
  
  \theoremstyle{plain}
  \newtheorem*{Main}{Main Theorem}
  \newtheorem*{MainT}{Main Technical Theorem}

  \theoremstyle{definition}
  \newtheorem{oss}[df]{Remark}

 \theoremstyle{definition}
  \newtheorem{ass}{\underline{\textit{Assumption}}}[section]

%\newpage
%*************************************************************

\tableofcontents
\markboth{Contents}{Contents}
%\newpage
%*************************************************************
\section{Introduction}
\label{intro}
Classical field theory is an infinite dimensional generalization of classical mechanics, Therefore the transfer of structures like Lagrangians, Hamiltonians, Poisson brackets etc. 
involve problems of infinite dimensional analysis. To a certain extent these problems can be circumvented by exploiting the locality principle of classical field theory and using as variables finitely many derivatives of the field at a given point. A particular elegant way of doing this is multisymplectic analysis \cite{Kij,Got,CCI}.

A drawback of this approach is that the connection to quantum field theory is not easy (for an attempt see \cite{Kanachikov1,Kanachikov2}). Other approaches use the path integral formalism of quantum field theory. There it is extremely difficult to arrive at rigorous results. 

A new way of formulating perturbative quantum field theory which combines aspects of canonical quantization with aspects from the path integral formalism is perturbative  algebraic quantum field theory \cite{BDF,DF,DF02,DF04}. It inherits from the canonical approach the emphasis on algebraic structures and from the path integral formalism the insight that all possible field configurations have to be considered, not only those which are solutions of the field equations. This approach goes back to old ideas of Schwinger, St\"uckelberg and Bogoliubov, was made mathematically rigorous by the method of causal perturbation theory (Epstein-Glaser renormalization) and was mainly developed in order to treat quantum field theory on generic curved spacetimes in agreement with the principle of general covariance. The $\hbar=0$ limit then is classical field theory \cite{BDF} where the observables are functionals of generic field configurations and the Poisson bracket is a functional bi-differential operator.
The general method was mainly developed on the example of a scalar field.

A generalization to gauge theories was performed in several steps \cite{Dutsch:2004hu,DF02}, culminating in the work of Hollands \cite{H}
on quantization of Yang-Mills theories on curved backgrounds. There it was observed that the Batalin-Vilkovisky formalism \cite{Batalin:1977pb,Batalin:1981jr,Batalin:1983wj,Batalin:1983jr} is most suitable for the treatment of gauge theories.

The Batalin-Vilkovisky (BV) formalism (previously and independently introduced by Zinn-Justin \cite{Zinn}) generalizes the BRST approach \cite{BRST1,BRST2}. In the literature it is also called the antifield formalism. It is usually used as a covariant method to perform the gauge-fixing in quantum field theory, but was also applied to other problems like analyzing possible deformations of the action and anomalies. For a review on BRST see \cite{HennBar}, where the cohomological interpretation of the antifield formalism is especially stressed. It was already recognized in \cite{Henneaux:1992ig} (see also \cite{Henneaux:1989jq}) that the Koszul-Tate complex plays a very important role in the antifield formalism and the methods of homological algebra can be successfully applied. The book \cite{Henneaux:1992ig} shows also the geometrical interpretation of the BV complex in both the Hamiltonian and the Lagrangian setting, but the mathematically rigorous results are formulated only when the configuration space is finite dimensional. A generalization to infinite dimensional spaces is usually achieved in the jet space formalism \cite{HennBar}, together with the variational bicomplex \cite{Sard2,Sard,Sard1}. This formulation is sometimes more convenient for calculation but it is hard to see the general structure behind it. Furthermore it is usually applied only to the local functionals of the configuration fields and in quantum field theory one needs more singular objects (see for example \cite{BDF}).

Most of the mathematically rigorous versions of the BV formalism (see, e.g., \cite{Stasheff2,Stasheff,Froe}) typically use assumptions which are not satisfied by the examples of interest in physics, e.g. that the underlying spacetime is compact (this is never true for a globally hyperbolic spacetime), or that the field configurations of interest have compact support (this excludes solutions of hyperbolic field equations), or spacetime is even replaced by a finite set in order to make the path integral well defined. 
Moreover the formalism looks like a formal recipe where a lot of additional structure is introduced (ghosts, antifields,\dots) whose conceptual and mathematical status is often unclear. A more conscious approach, with a due attention to the infinite dimensional nature of the problem was proposed in the notes \cite{Cost2} but the formalism is still not complete.

In the present paper we attempt to provide a conceptually convincing and mathematically rigorous version of the BV formalism  
for classical field theory on globally hyperbolic spacetimes. Our construction is based on the principle of local covariance \cite{BFV}. According to this principle one does no longer try to construct a theory on a given spacetime. Instead all constructions have to be performed simultaneously on all spacetimes of a given class in a coherent way. This amounts to a functorial relation between spacetimes and the associated Poisson algebras of observables. We hope that our formulation improves the understanding of classical field theory (see also \cite{BFR}), but our main goal is to use it as a basis for quantum field theory, in particular for quantum gravity as proposed in \cite{F,BF1}.
%%%%%%%%%%%%%%%%%%%%%%%%%%%%%%
\section{The BV formalism for the scalar field}\label{scalar}
%%%%%%%%%%%%%%%%%%%%%%%%%%%%%%
Before entering the more complicated realm of gauge theories, we want to illustrate the basic structures on the example of a scalar field. 
     
Let $\E(M)$ be the space of smooth functions on the smooth manifold $M$, equipped with the locally convex topology of convergence of all derivatives, uniformly on compact sets. Let $\chi:M\to N$ be an embedding and let 
$\chi^*$ be the pullback, $\chi^*\ph=\ph\circ\chi$ for $\ph\in\E(M)$. Then $\E$ can be understood as a contravariant functor
from the category $\Loc$ of  time-oriented globally hyperbolic spacetimes and with causal isometric embeddings
 as morphisms to the category $\Vect$ of locally convex vector spaces (lcvs) with continuous injective linear maps as arrows. The embedding $\chi$ is associated to the pullback $\chi^*$,  $\E \chi=\chi^*$. We interpret $\E(M)$ as the space of field configurations. 
%%%%%%%%%%%%%%%%%%%
\footnote{In view of the expected properties of the path integral one might conjecture that field configurations which are not smooth will play an important role. Fortunately this is not the case in a perturbative framework. Since our observables are functions of field configurations, an extension to more general configurations would mean that we had to distinguish functions which coincide on smooth configurations.}
%%%%%%%%%%%%%%%%%%%

Another functor between these categories is the functor which associates to a manifold the space $\D(M)$ of compactly supported test functions with its standard topology. This functor is covariant with respect to push forwards $\D\chi=\chi_*$,
\[
\chi_*f(x)=\left\{
\begin{array}{ccc}
f(\chi^{-1}(x)) & , & x\in\chi(M),\\
0                    & , & \text{else}.
\end{array} \right.
\]
The observables of the theory are functions $F:\E(M)\to\RR$%\footnote{One can use also a weaker notion by taking all functions  $F:U\to\RR$, where $U$ is some open subset of $\E(M)$.
 %Then the observables would form a presheaf of algebras over $\E(M)$.}
.They are supposed to be smooth in the following sense:
For each $n\in\NN$ and each $\ph\in\E(M)$ there exists a distributional density $F^{(n)}(\ph)$ with compact support on $M^n$.
$F^{(n)}(\ph)$ is symmetric under permutations of the $n$ arguments from $M$ and satisfies
\[
\frac{d^n}{d\la^n}F(\ph+\la\psi)\restriction_{\la=0}=\langle F^{(n)}(\ph),\psi^{\otimes n}\rangle \ .
\]
Moreover, the map $\E(M)\times\E(M)\ni(\ph,\psi)\mapsto \langle F^{(n)}(\ph),\psi^{\otimes n}\rangle$ is required to be continuous. Since $\E(M)$ is a nuclear Fr\'echet space, this notion coincides with the standard definition of smooth maps between locally convex vector spaces (see the Appendix \ref{iddg} for details).

The support of a function $F$ on configuration space is defined as the set of points $x\in M$ such that $F$ depends on the field configuration in any neighbourhood of $x$
\begin{align}\label{support}
\supp\, F=\{ & x\in M|\forall \text{ neighbourhoods }U\text{ of }x\ \exists \ph,\psi\in\E(M), \supp\,\psi\subset U 
\\ & \text{ such that }F(\ph+\psi)\not= F(\ph)\}\ .\nonumber
\end{align}
A very important property of functionals in the context of classical field theory is \textit{locality}. According to the standard definition it means that the functional $F$ is of the form:
\[
F(\ph)=\int\limits_M f(j_x(\ph))\,d\mu(x)\,,
\]
where $f$ is a function on the jet space over M and $j_x(\ph)=(x,\ph(x),\pa\ph(x),\dots)$ is the jet of $\ph$ at the point $x$. It was already recognized in \cite{DF04,BDF,BFR} in the context of perturbative algebraic quantum field theory that the property of locality can be reformulated using the notion of \textit{additivity} of a functional. The concept itself dates back to the works of Chac\'on and Friedman \cite{ChF} and even to earlier ones as seen from the survey of Rao \cite{Rao}. We say that $F$ is additive if for all fields $\varphi_1,\varphi_2,\varphi_3\in\E(M)$ such that $\textrm{supp}(\varphi_1)\cap\textrm{supp}(\varphi_3)=\varnothing$ we have:
\be\label{add}
F(\varphi_1+\varphi_2+\varphi_3)=F(\varphi_1+\varphi_2)-F(\varphi_2)+F(\varphi_2+\varphi_3)\,.
\ee
One can show that a smooth compactly supported functional is local if it is additive and
the wave front sets of its derivatives are orthogonal to the tangent bundles of the thin diagonals $\Delta^k(M)\doteq\left\{(x,\ldots,x)\in M^k:x\in M\right\}$, considered as subsets of the tangent bundles of $M^k$:
\be
\textrm{WF}(F^{(k)}(\ph))\perp T\Delta^k(M)\,.\label{WFloc}
\ee
In particular $F^{(1)}(\ph)$ is a smooth section for each fixed  $\ph$. This property turns out to be crucial in the context of the BV formalism. We shall come back to it later on. 

The space of compactly supported smooth local functions $F:\E(M)\to\RR$ is denoted by $\F_\loc(M)$. The algebraic completion of $\F_\loc(M)$ with respect to the pointwise product
\be\label{prod}
F\cdot G(\ph)=F(\ph)G(\ph) \,,
\ee
is a commutative algebra $\F(M)$ consisting of finite sums of finite products of local functionals. We call it \textit{the algebra of multilocal functionals}.
 $\F$ becomes a (covariant) functor by setting $\F\chi(F)=F\circ \E\chi$, i.e. $\F\chi(F)(\ph)=F(\ph\circ\chi)$. Later we will enlarge these algebras by admitting more singular functionals (Section \ref{Peierls} and Appendix \ref{topo}).

In the framework of infinite dimensional differential geometry one models manifolds on general locally convex vector spaces. Many of the results known from the finite dimensional case generalize to this setting. One can define vector fields, differential forms and exterior derivative. We provide precise definitions in the Appendix \ref{iddg}. For further reference see for example \cite{Neeb,Michor}. In particular a locally convex vector space $\E(M)$ is a trivial manifold. General vector fields are derivations of the algebra of smooth functions $\Ci(\E(M))$. Since we restricted our considerations to its subalgebra $\F(M)$, we have to identify a submodule of vector fields that correspond to derivations of $\F(M)$. Moreover we want to associate this module to $M$ in a functorial way. These two requirements already determine the class of vector fields we want to consider.

First we note that vector fields $X$ on $\E(M)$ can be considered as smooth maps from $\E(M)$ to $\E(M)$, since the tangent spaces of a vector space can naturally be identified with the space itself. We restrict ourselves to smooth compactly supported maps $X$ with image in $\D(M)$. Moreover we require them to be local, i.e. $X(\ph)(x)$ depends only on the jet of $\ph$ at the point $x$ (for an equivalent definition of locality, involving the wave front set condition see the Appendix \ref{topo}). 
Vector fields with these properties act on $\F(M)$ as derivations,
\be\label{derivation}
\pa_XF(\ph):=\langle F^{(1)}(\ph),X(\ph)\rangle\,.
\ee
The spacetime support of a vector field $X$ is defined in the following way:
\be\label{suppder}
\begin{split}
\supp\, X=\{x\in M|\forall \text{ neigh. }U\text{ of }x\ & \exists F\in\F(M), \supp\,F\subset U\ \text{ such that }\partial_XF\neq 0 \\
\text{or } \exists\ \ph,\psi\in\E(M),\supp\,\psi\subset U & \text{ such that }X(\ph+\psi)\neq X(\ph)\}\ .
\end{split}
\ee
The space of vector fields with compact support and properties mentioned above can be algebraically completed to an $\F(M)$-module (with respect to the pointwise product (\ref{prod})) which is denoted by $\V(M)$. $\V$ becomes a (covariant) functor by setting
\be\label{funct}
\V\chi(X)= \D\chi\circ X\circ \E\chi \ .
\ee
The action of vector fields on functions and the Lie bracket of vector fields can be considered as special instances of the Schouten bracket of alternating multi-vector fields. In our case these are smooth, compactly supported maps from $\E(M)$ into $\Lambda(\D(M))$, with
\[
\La(\D(M))=\bigoplus \La^n(\D(M))\,,  
\]
where $\La^n(\D(M))$ is the space of compactly supported test functions on $M^n$ which are totally antisymmetric under permutations of arguments (with $\Lambda^0(\D(M))=\RR$).

The alternating multi-vector fields with the regularity properties discussed above form a graded commutative algebra $\La\V(M)$ with respect to the product
\be\label{wedge}
X\wedge Y(\ph)=X(\ph)\wedge Y(\ph)\,,
\ee
where on the right hand side the wedge product is the antisymmetrized tensor product of test functions. 

The Schouten bracket is an odd graded Poisson bracket on this algebra.
It maps
\[
\{\cdot,\cdot\}:\La^n\V(M)\times\La^m\V(M)\to\La^{n+m-1}\V(M)\,,
\]
is graded antisymmetric, i.e.
\[
\{Y,X\}=-(-1)^{(n-1)(m-1)}\{X,Y\}\,,
\]
and satisfies the graded Leibniz rule
\be\label{leibniz}
\{X,Y\wedge Z\}=\{X,Y\}\wedge Z+(-1)^{nm}\{X,Z\}\wedge Y\,,
\ee
where $n$ is the degree of $Y$ and $m$ the degree of $Z$.
For $X\in\La^1\V(M)\equiv\V(M)$ and $F\in\La^0\V(M)\equiv \F(M)$ it coincides with the action of $X$ as a derivation
\[
\{X,F\}=\pa_XF\,,
\]
and for $X,Y\in\La^1\V(M)$ it coincides with the Lie bracket
\[
\pa_{\{X,Y\}}=\pa_X\pa_Y-\pa_Y\pa_X \ .
\]
Moreover, it satisfies the graded Jacobi rule
\be\label{Jacid}
\{X,\{Y,Z\}\}-(-1)^{(n-1)(m-1)}\{Y,\{X,Z\}\}=\{\{X,Y\},Z\} \ , \ n=\mathrm{deg}(X),m=\mathrm{deg}(Y)\,.
\ee

To establish the connection to the BV formalism we identify the functional derivatives $\frac{\delta}{\delta\ph(x)}$ with the antifields $\ph^\ddagger(x)$. The algebra of alternating multivector fields is then the algebra generated by fields and antifields, and the Schouten bracket coincides with the antibracket.
 
In the next step we introduce an action functional $S$. Since neither our spacetimes nor the support of typical configurations are compact we cannot identify $S$ with a function on $\E(M)$. Instead we follow \cite{BDF} and define a Lagrangian $L$ as a natural transformation between the functor of test function spaces $\D$ and the functor $\F_\loc$ such that it satisfies $\supp(L_M(f))\subseteq \supp(f)$ and the additivity rule 
\footnote{We do not require linearity since in quantum field theory the renormalization flow does not preserve the linear structure; it respects, however, the additivity rule (see \cite{BDF})}
\[
L_M(f+g+h)=L_M(f+g)-L_M(g)+L_M(g+h)\,,
\]
for $f,g,h\in\D(M)$ and $\supp\,f\cap\supp\,h=\emptyset$.  
As shown in \cite{BFR} this implies that $L_M(f)$ satisfies an analogous additivity relation (\ref{add}) with respect to the field. Such an additivity relation together with support properties implies that the function $L_M(f)$ is local. 
The action $S(L)$ is now defined as an equivalence class of Lagrangians  \cite{BDF}, where two Lagrangians $L_1,L_2$ are called equivalent $L_1\sim L_2$  if
\be\label{equ}
\supp (L_{1,M}-L_{2,M})(f)\subset\supp\, df\,, 
\ee
for all spacetimes $M$ and all $f\in\D(M)$. 
This equivalence relation applies in particular to Lagrangians differing by a total divergence. 

Following \cite{BDF} we discuss now the equations of motion. The Euler-Lagrange derivative of $S$ is a
natural transformation $S':\E\to\D'$
defined by
\[
\left<S'_M(\ph),h\right>=\left<L_M(f)^{(1)}(\ph),h\right>\,,
 \]
 with $f\equiv 1$ on $\supp h$. The field equation is:
\be
 S_M'(\ph)=0\,.\label{eom}
\ee
The space of solutions $\ph$ is a subspace of $\E(M)$ which we denote by $\E_S(M)$. In the on-shell setting of classical field theory one is interested in the space $\F_S(M)$ of multilocal functionals on $\E_S(M)$. This space can be understood as the quotient $\F_S(M)=\F(M)/\F_0(M)$, where $\F_0(M)$ is the space of multilocal functionals that vanish on $\E_S(M)$ (on-shell). 

The aim of the BV construction is to find a homological interpretation of $\F_S(M)$. In the first step we note that from (\ref{eom}) for every vector field $X\in\V(M)$ the functional
\[
\ph\mapsto\left<S_M'(\ph),X(\ph)\right>=:\delta_S(X)(\ph)\,,
\]
is an element of $\F_0(M)$. Thus we obtain a mapping:
\[
\V(M)\xrightarrow{\delta_S}\F(M)\,,
\]
and $\im\delta_S\subset\F_0(M)$. 
If the equation of motion is  strictly hyperbolic the inclusion holds also in the opposite direction. The proof under some technical assumptions is provided in \cite{BFR}. We say that $\F_0(M)$ is \textit{generated by the equations of motion}. 

Assume that we are given a generalized action functional $S$ for which this is the case. Then we have:
\[
\F_S(M)=\F(M)/\F_0(M)=\F(M)/\im\delta_S\,.
\]
This can be easily translated into the language of homological algebra. Consider a chain complex:
\be\label{Kshort}
\begin{array}{c@{\hspace{0,2cm}}c@{\hspace{0,2cm}}c@{\hspace{0,2cm}}c@{\hspace{0,2cm}}c@{\hspace{0,2cm}}c@{\hspace{0,2cm}}c}
0&\xrightarrow{}&\V(M)&\xrightarrow{\delta_S}&\F(M)&\rightarrow &0\\
 &&1&&0&&
\end{array}
\ee
The numbers below indicate the chain degrees. The $0$-order homology of this complex is equal to: $\F(M)/\F_0(M)=\F_S(M)$. This completes the first step in finding the homological interpretation of $\F_S(M)$. 

In the next step we shall construct a \textit{resolution} of $\F_S(M)$. In homological algebra a resolution of an algebra $A$ is a differential graded algebra $(\A,\delta)$, such that $H_0(\delta)=A$ and  $H_n(\delta)=0$ for $n>0$. We can start constructing the resolution of  $\F_S(M)$ from the chain complex (\ref{Kshort}). We said before that the space of multivector fields  $\La\V(M)$ is a graded commutative algebra with respect to the product (\ref{wedge}). Moreover it is equipped with the natural bracket $\{.,.\}$. 

Since $\delta_S(X)$ is just $\{X,L_M(f)\}$ for $f\equiv 1$ on $\supp X$, for $X\in \V(M)$, we can extend $\delta_S$ to $\La\V(M)$ and obtain the complex:
\be\label{K}
\begin{array}{c@{\hspace{0,2cm}}c@{\hspace{0,2cm}}c@{\hspace{0,2cm}}c@{\hspace{0,2cm}}c@{\hspace{0,2cm}}c@{\hspace{0,2cm}}c@{\hspace{0,2cm}}c@{\hspace{0,2cm}}c}
\ldots&\rightarrow&\La^2\V(M)&\xrightarrow{\delta_S}&\V(M)&\xrightarrow{\delta_S}&\F(M)&\rightarrow& 0\\
 &&2&&1&&0&&
\end{array}\ ,
\ee
where $\delta_S$ is called the Koszul map. Now we want to calculate $H_1(\La\V(M),\delta_S)$.
First we identify the elements of  $\ke(\delta_S)_{\V(M)\rightarrow\F(M)}$. In the BV formalism they are called
\textit{symmetries}. They may be interpreted as the vanishing directional derivatives of the action,
\be\label{sym0}
0=\delta_SX(\ph)=\left<S_M'(\ph),X(\ph)\right>=:\partial_X(S_M)(\ph)\,,
\ee 
for all $\ph\in\E(M)$ (cf. also \cite{Urs}).

A symmetry $X$ is called \textit{trivial} if it vanishes on-shell, i.e. $X(\ph)=0$ for all $\ph\in\E_S(M)$. It is easy to see that $(\im\delta_S)_{\La^2\V(M)\rightarrow\V(M)}$ contains exactly the trivial symmetries, so they don't contribute to $H_1(\La\V(M),\delta_S)$. We can conclude that the first homology is trivial if the action $S$ doesn't possess any \textit{nontrivial local symmetries}. This condition can be formulated as follows:
\begin{equation}\label{sym}
X(\ph)\perp S_M'(\ph)\quad \forall\ph\in\E(M)\Rightarrow X(\ph)=0\ \forall\ph\in\E_S(M)\ .
\end{equation}
Now we derive from (\ref{sym}) a sufficient condition for an action $S$ to be free of nontrivial symmetries.
From $\langle S_M'(\ph),X(\ph)\rangle=0$ it follows in particular that
\[
\frac{d}{d\lambda}\Big|_{\lambda=0}\langle S_M'(\ph+\lambda \psi),X(\ph+\lambda \psi)\rangle=0\quad\forall \psi\in \E(M) \ .
\]
This implies that
\begin{equation} \label{no nontrivial symmetries}
\langle S_M''(\ph),X(\ph)\otimes\psi\rangle+\langle S'(\ph),X^{(1)}(\ph)\psi\rangle=0 \ .
\end{equation}
where the second derivative $S''$ of an action $S\equiv S(L)$ is a natural transformation from $\E\to \D'\otimes \D'$ defined by
\[S''_M= L_M(f)^{(2)} \text{ on }\D(K)\otimes \D(K)\]
with any compact subset $K\subset M$ and with $f\equiv 1$ on $K$. Since $L_M(f)$ is local, its second functional derivative has support on the diagonal, so we can replace one of its arguments by a smooth function without restrictions on the support. Moreover, locality implies that $S_M''(\ph)$ is a differential operator, namely the operator defining the linearized equation of motion around the field configuration 
$\ph$ (in general, $\ph$ may not be a solution of the equation of motion.)
  
Now for $\ph\in \E_S(M)$ the second term in \eqref{no nontrivial symmetries} vanishes, hence we obtain
\[
\langle S_M''(\ph),X(\ph)\otimes\psi\rangle=0\qquad\forall\psi\in\E(M) \ .
\]
This means that for all $\ph\in \E_S(M)$, $X(\ph)$ has to be a solution of the linearized equation of motion. Since $X(\ph)\in\E_c(M)$, the action $S$ possesses no nontrivial symmetries if the linearized equation of motion doesn't have any nontrivial compactly supported solutions. In particular this is the case when $S_M''(\ph)$ is a strictly hyperbolic differential operator.

For an action that doesn't possess any nontrivial symmetries $H_1(\La\V(M),\delta_S)=0$. By an analogous argument one can also prove that the higher homology groups are trivial, thus we have
\begin{eqnarray*}
H_0\left(\La\V(M),\delta_S\right)&=&\F_S(M)\,,\\
H_k\left(\La\V(M),\delta_S\right)&=&0,\ k>0\,.
\end{eqnarray*}
Hence in this case the differential graded algebra $(\La\V(M),\delta_S)$ underlying (\ref{K}) is a resolution of $\F_S(M)$, called the \textit{Koszul resolution}. We want to stress that in our setting $\delta_S$ is not an inner derivation with respect to the antibracket. This is a major difference with respect to other approaches and stems from the fact that the action itself is not an element of $\F(M)$, but has to be understood as an 
equivalence class of natural transformation between the functors $\D$ and $\F$. Nevertheless, locally $\delta_S$ can be written in terms of inner derivations, since 
$\delta_{S(L)}X=\{X,L_M(f)\}$ for $f\equiv 1$ on $\supp\, X$, $X\in \V(M)$.

All the structures presented in this section can be easily generalized to the case of vector-valued fields instead of a scalar field (see \cite{Few}). 
One only has to replace the category $\Loc$ by a category of vector bundles $\Bndl$ over globally hyperbolic spacetimes. If in addition a functor $\B$ from $\Loc$ to $\Bndl$ is given (the case of natural bundles) we obtain again a description in terms of functors on $\Loc$.
%%%%%%%%%%%%%%%%%%%%%%%%%%%%%%
\section{The BV formalism for the Yang-Mills theories}\label{YM}
%%%%%%%%%%%%%%%%%%%%%%%%%%%%%%
After presenting the basic properties of the Koszul construction on the example of a scalar field we can move to more complicated cases. Assume that we have a configuration space functor $\E$ and a functor $M\mapsto\G(M)$ of Lie groups acting on the configuration spaces. We denote the spaces of orbits under this action by $\E/\G(M)$. Let us take a generalized Lagrangian $L:\D\to\E$ which is invariant under the action of the group and such that the initial value problem is well posed for the elements of $\E/\G(M)$. This is the case for example for Yang-Mills theories. Let $(\E/\G)_S(M)$ denote the space of solutions of the equations of motion. Now we would like to find the space of functionals on $(\E/\G)_S(M)$. We refer to them as gauge invariant on-shell functionals $\F_S^\inv(M)$.
 It would be tempting to repeat exactly the steps of section \ref{scalar} and try to construct the Koszul resolution starting from $\F^\inv(M)$, the space of functions on the gauge orbits $\E/\G(M)$. The problem is that $\E/\G(M)$ in general doesn't admit neither a topological vector space nor a manifold structure. Therefore one cannot apply the geometrical construction of the Koszul resolution directly on $\E/\G(M)$. 
 
Instead we pass over to a cohomological description of  $\F^\inv(M)$ as the 0th order cohomology of the so-called Chevalley-Eilenberg \textit{cochain complex} associated with the induced action of $Lie(\G)$ on $\F(M)$. This is the graded algebra of alternating multilinear maps:
\be\label{CEco}
\CE(M):=\bigoplus\limits_{k=0}^\infty L^k_\alte(Lie(\G),\F(M))\,,
\ee
with a differential $\gamma$, called the Chevalley-Eilenberg map \cite{ChE} which is the external differential with respect to the action of the group. 

Now instead of the Koszul construction based on $\F^\inv(M)=H^0(\CE(M))$ we attempt to construct the Koszul resolution starting from the full graded differential algebra  $\CE(M)$. This amounts to the construction of the Batalin-Vilkoviski complex in the context of gauge theories. Since we want to consider a continuous Chevalley-Eilenberg cohomology, we have to specify a topology on the space of functionals. We choose the nuclear topology $\tau$ of pointwise convergence of functionals together with their derivatives, defined in Appendix \ref{topo}. In this section $\F(M)$ and $\V(M)$ are always equipped with this topology.

After this brief introduction we show the details of the BV construction on the example of Yang-Mills theories. We start this section with some geometrical preliminaries. Then we discuss the functorial aspects of the constructions involved and we introduce the Chevalley-Eilenberg cochain complex. Next we show that this induces a functor from the category of spacetimes to the category of differential graded algebras. Finally we perform the Koszul construction on the differential graded algebra $\CE(M)$ and show that this indeed allows us to recover the space of gauge invariant on-shell functionals $\F_S^\inv(M)$ as a certain cohomology class. This construction is usually called Koszul-Tate resolution. We end this section with the discussion of the so called \textit{classical master equation}. In the standard approach this is a condition which has to be satisfied by the action functional extended to the BV-complex. In our framework this is a condition formulated on the level of natural transformations.
%===========================
\subsection{Geometrical preliminaries}\label{geom}
%===========================
 Let $G$ be a semisimple compact Lie group and $g$ its Lie algebra.
Let $P\xrightarrow{\pi} M$ be a principal bundle with the fiber $G$. We restrict ourselves to trivial bundles in order to have a functor from $\Loc$ to the category of principal bundles (see the remark at the end of Section \ref{scalar}). Then the configuration space of the theory is $\E(M)=\Omega^1(P,g)^G\cong \Omega^1(M,g)$. 
Now let $\chi:M\rightarrow N$ be a causal isometric embedding. We can define a morphism from $\E(N)$ to $\E(M)$ in a natural way by setting: $\E\chi(\omega\otimes a):=\chi^*\omega\otimes a$, where $\omega\in\Omega^1(M)$, $a\in g$ and the pullback of a differential form is defined as:
 $(\chi^*\omega)_x:=\omega_{\chi(x)}\circ d_x\chi$. In this way $\E$ becomes a contravariant functor between the categories $\Loc$ and $\Vect$. One can also define a covariant functor $\E_c$ by assigning to a spacetime the space of compactly supported $g$-valued forms $\Omega_c^1(M,g)$. In this case ${\E_c}\chi$ maps forms to their push-forwards.
 
We define the gauge group as the space of vertical $G$-equivariant compactly supported diffeomorphisms of $P$:
 \[
 \G:=\{\alpha\in \Diff_c(P)|\alpha(p\cdot g)=\alpha(p)\cdot g, \pi(\alpha(p))=\pi(p),\ \forall g\in G, p\in P\}\,.
 \]
This space can be also characterized by $\G\cong\Gamma_c(M\leftarrow(P\times_G G))$. For a trivial bundle $P$ this is just $\G(M)\cong\Ci_c(M,G)$. It was shown (\cite{Neeb04,Gloe,Michor}, see also \cite{Neeb,Wock}) that $\Ci_c(M,G)$ can be equipped with a structure of an infinite dimensional Lie group modeled on its Lie algebra $\frakg(M):=\Ci_c(M,g)$. The exponential mapping can be defined and it induces a local diffeomorphism at 0. Since the gauge group is just a subgroup of $\Diff(P)$, it has a natural action on $\Omega^1(P,g)^G$ by the pullback:
\[\rho_M(\alpha)A=(\alpha^{-1})^*A, \quad\alpha\in\mathcal{G},\quad A\in\Omega^1(P,g)^G\] 
The derived action of the Lie algebra $\frakg$ on $\Omega^1(P,g)^G$ is therefore defined as:
\be\label{rho}
\rho_M(\xi)A\doteq\frac{d}{dt}\Big|_{t=0}\rho_M(\exp t\xi)A=\frac{d}{dt}\Big|_{t=0}(\exp(-t\xi))^*A=\pounds_{Z_{\xi}}A=d\xi+[A,\xi]\,,
\ee
where $Z_{\xi}$ is the fundamental vector field on $P$ associated to $\xi$. 
$\rho(\xi)$ can be interpreted as a vector field on $\E(M)$ (in analogy to the definition of vector fields $X\in \V(M)$ in Section \ref{scalar}) which associates to the field configuration $A\in\E(M)$ the compactly supported configuration $d\xi+[A,\xi]$. Clearly, $\rho$ is a natural transformation from $\frakg$ to $\V$ as may be seen from the relation
\[
\rho_N(\chi_*\xi)A=\chi_*(\rho_M(\xi)(\chi^*A))
\]
for a causal embedding $\chi:M\to N$.

Now we introduce the generalized Lagrangian $L_M(f)=-\frac{1}{2}\int_M f\,\tr(F \wedge * F)$, where $F=dA+[A,A]$ is the field strength corresponding to the gauge potential $A$ and $*$ is the Hodge operator. Using the criterion (\ref{sym}) we see that this action has nontrivial symmetries because the linearized equation of motion might possess nontrivial compactly supported solutions. Actually these symmetries can be easily characterized. We see immediately that for each $\xi\in\frakg(M)$ we obtain a symmetry $\rho_M(\xi)\in\V(M)$ by the invariance of the Lagrangian. 
More general symmetries may be obtained by extending $\rho_M$ to an $\F(M)$-module map $\fG(M)\to\V(M)$ where $\fG(M)$ is the space of smooth compactly supported multilocal functions $\Xi:\E(M)\rightarrow\frakg(M)$ such that $\bigcup\limits_{\ph\in\E(M)}\supp(\Xi(\ph))$ is compact. The last condition is needed to assure that $\rho_M(\Xi)$ is indeed a compactly supported vector field according to the definition (\ref{suppder}). Any symmetry may be obtained by a sum of a trivial symmetry with a symmetry of the form $\rho_M(\Xi)$ with $\Xi\in\fG(M)$.
%=============================
\subsection{Chevalley-Eilenberg complex} \label{ChEil}
%=============================
We start this section with the construction of the Chevalley-Eilenberg cochain complex $\CE(M)$. In order to have a good behavior of this structure under the embeddings $\chi:M\rightarrow N$, we drop the assumption of compact support of the elements of $\frakgo(M):=\Ci(M,g)$ and require instead  the linear alternating maps in (\ref{CEco}) to be compactly supported. The assignment of $\frakgo(M)$ to a manifold $M$ is a contravariant functor. It associates to a morphism $\chi:M\rightarrow N$ a map $\frakgo\chi$ acting on functions as a pullback: $\frakgo\chi(f\otimes a):=\chi^*f\otimes a$ for $f\in\Ci(N)$, $a\in g$. 

Now instead of adapting (\ref{CEco}), we define $\CE(M)$ to be the space of smooth compactly supported multilocal maps $\Ci_\ml(\E(M),\La{\frakgo}'(M))$ (the definition of multilocal vector-valued maps is given in the appendix \ref{topo}). Here ${\frakgo}'(M)$ means the strong dual of $\frakgo(M)$ (in other words ${\frakgo}'(M)$ is the space of compactly supported distributions with values of $g$) We equip $\CE(M)$ with the topology $\tau$ of pointwise convergence of functionals together with their derivatives (see Appendix \ref{topo}). Note that $\CE(M)$ has $\La^k{\frakgo}'(M)\otimes\F(M)$ as a dense subspace\footnote{Since the spaces $\frakgo$ and $\frakg$ are nuclear, the projective and injective tensor products coincide (see Appendix \ref{topo} and \cite{Jar} for details).}, so our definition can be seen as a generalization of the standard one used in finite dimensional case. The Chevalley-Eilenberg  differential $\gamma_M$ is defined on the tensor products as:
\begin{eqnarray}
\gamma_M:&\ &\Lambda^q\frakgo'(M)\otimes\F(M)\rightarrow\Lambda^{q+1}\frakgo'(M)\otimes\F(M)\,,\nonumber\\
(\gamma_M \omega)(\xi_0,\ldots, \xi_q)&\doteq&\sum\limits_{i=0}^q(-1)^i\partial_{\rho_M(\xi_i)}(\omega(\xi_0,\ldots,\hat{\xi}_i,\ldots,\xi_q))+\nonumber\\
&+&\sum\limits_{i<j}(-1)^{i+j}\omega\left([\xi_i,\xi_j],\ldots,\hat{\xi}_i,\ldots,\hat{\xi}_j,\ldots,\xi_q\right)\,,
\end{eqnarray}
and extended to the whole space $\CE(M)$ by continuity.
Let $\dgA$ be the category with differential graded algebras as objects and differential graded algebra embeddings as morphisms. From the naturality of $\rho$ and $[.,.]$ it follows that $\CE(M)$ becomes a covariant functor from $\Loc$ to $\dgA$ if we set $\CE\chi(\omega):=(\frakgo'\chi)^k\circ\,\omega\circ\E\chi$, for $\omega\in \Ci(\E(M),\Lambda^{k}\frakgo'(M))$. The space of gauge invariant functionals is now recovered as: $\F^\inv(M)=H^0(\CE(M),\gamma_M)$. This is again a covariant functor.

To end this section we want to make a brief comment on the relation to the "standard" approach. 
Note that we can write elements of 
$\CE$ formally as:
\[
F(\ph)=\int dx^1\ldots dx^n f_{a_1...a_n}(\ph)(x_1,\ldots, x_n) C(x_1)^{a_1}\wedge...\wedge C(x_n)^{a_n}\,,
\]
where $f_{a_1...a_n}(\ph)$ is a compactly supported distribution and $C^a(x)\in  {\frakgo}'$ are coefficients of the Maurer-Cartan form $C$ on $\G(M)$. In physics one calls them the \textit{ghost fields}. They can be seen as generators of the algebra $\CE(M)$. In the present setting they appear naturally as elements of the Chevalley-Eilenberg complex.
The grading of this complex is called the \textit{pure ghost number}. We denote it by $\#\pg$.
%%%%%%%%%%%%%%%%%%%%%%%%%%%%%%%%
\subsection{Koszul-Tate resolution}\label{KT}
%%%%%%%%%%%%%%%%%%%%%%%%%%%%%%%%
In the previous section we showed that $\F^\inv(M)=H^0(\CE(M),\gamma_M)$. Now we want to construct  $\F_S^\inv(M)$, the space of gauge invariant on-shell functionals. Obviously we have:
 $\F_S^\inv(M)=H^0(\CE_S(M),\gamma_M)$, where $\CE_S(M)$ is the space of smooth compactly supported multilocal on-shell functions $\Ci_\ml(\E_S(M), \La{\frakgo}'(M))$. Therefore our task now is to find the Koszul resolution of the differential graded algebra $\CE_S(M)$. We shall do it by repeating the steps of Section \ref{scalar}, but this time for a graded configuration space. Vector fields are replaced with graded derivations of the algebra $\CE(M)$. We restrict ourselves to those which are smooth compactly supported (as derivations, i.e. in the sense of (\ref{suppder})) multilocal maps
 \be\label{DerG}
 \E(M)\rightarrow\La{\frakgo}'(M)\widehat{\otimes}\left(\E_c(M)\oplus\frakg(M)\right)\,.
\ee
A pair $(X,\xi): \E(M)\rightarrow \E_c(M)\oplus\frakg(M)$ acts on $\CE(M)$ in the following way:
\be \label{interior}
(\partial_{(X,\xi)}F)(\ph):=(\partial_XF)(\ph)+i_{\xi(\ph)} F(\ph)\,,
\ee
where $i_{\xi(\ph)}$ is the interior product, i.e. the insertion of $\xi(\ph)\in\frakg(M)$ into
$\La\frakgo'$. The first term on the right hand side of \eqref{interior} is an even and the second one an odd derivation. The action of a general derivation of the form (\ref{DerG}) can be now defined by imposing the graded distributive rule. The space of such derivations is denoted by $\Der(\CE(M))$. Note that it contains in particular $\V(M)$ and $\fG(M)$. We extend the grading $\#\pg$ of $\CE(M)$ to a grading $\#\gh$ (called total ghost number) on $\Der(\CE(M))\oplus\CE(M)$ by 
\[
\#\gh=\#\pg-\#\af
\]
The antifield number $\#\af=1$ is assigned to the vector fields ($\E_c(M)$-valued functions), the antifield number $\#\af=2$ to the elements of $\fG(M)$ ($\frakg(M)$-valued functions), whereas elements of $\CE(M)$ have $\#\af=0$.
The graded commutator $[.,.]$ on $\Der(\CE(M))$ and the evaluation of a derivation on an element of $\CE(M)$ are special instances of the Schouten bracket $\{.,.\}$ on   $\Der(\CE(M))\oplus\CE(M)$, equipped with the grading $\#\gh$. Like in the scalar case, this structure is called \textit{the antibracket}. 

Using the fact that we restricted ourselves to derivations with compact support, one can show that $\Der(\CE)$ is a covariant functor from $\Loc$ to $\Vect$. The Chevalley-Eilenberg differential $\gamma_M$ itself is not an element of $\Der(\CE(M))$, since it is not compactly supported. Compare this with a similar situation encountered in Section \ref{scalar} when we showed that $\delta_S$ is not an inner derivation with respect to the antibracket. The differential $\gamma_M$ can be decomposed as a sum of two derivations $\gamma_M=\gamma_M^{(0)}+\gamma_M^{(1)}$, where:
\begin{eqnarray}
(\gamma_M^{(0)}F)(\xi)&\doteq&\partial_{\rho_M(\xi)}F,\quad\quad F\in\F(M),\ \xi\in\frakgo(M)\,,\label{gamm}\\
(\gamma_M^{(1)}\omega)(\xi_1,\xi_2)&\doteq&-\omega([\xi_1,\xi_2]),\quad\ \omega\in \frakgo'(M),\ \xi_1,\xi_2\in\frakgo(M)\nonumber
\end{eqnarray}
and $\gamma_M^{(0)}$, $\gamma_M^{(1)}$ are extended to the whole space $\CE(M)$ by requiring the graded Leibniz rule. Although $\gamma_M$ is not inner with respect to $\{.,.\}$ we can consider a following family of mappings $\theta_M$ from $\D(M)$ to $\Der(\CE)(M)$
\begin{eqnarray*}
(\theta_M^{(0)}(f)F)(\xi)&\doteq&\partial_{f\rho_M(\xi)}F,\quad\quad F\in\F(M),\ \xi\in\frakgo(M)\,,\\
(\theta_M^{(1)}(f)\omega)(\xi_1,\xi_2)&\doteq&\omega(f[\xi_1,\xi_2]),\quad\ \omega\in\frakgo'(M),\ \xi_1,\xi_2\in\frakgo(M)\,,
\end{eqnarray*}
where $f\in\D(M)$ is a test function and the multiplication with elements of $\E_c(M)$ and $\frakgo(M)$ is defined pointwise. It follows now that $\{\omega,\theta_M(f)\}=\gamma_M(\omega)$ if $\supp(\omega)\subset f^{-1}(1)$, $\omega\in\CE(M)$. From the fact that the Lie-algebra action is local it follows that $\theta$ constructed in this way is a natural transformation between the functors $\D$ and $\Der(\CE)$. One sees immediately the analogy with the generalized Lagrangians. We can now introduce a differential $s$ on the $\CE(M)$-module $(\Der(\CE)\oplus\CE)(M)$ by the following definition:
\be\label{s}
sF=\{F,L_M(f)+\theta_M(f)\}\,,
\ee
where $f\equiv1$ on $\supp\, F$, $F\in (\Der(\CE)\oplus\CE)(M)$. The graded differential module  $((\Der(\CE)\oplus\CE)(M),s)$ can be extended to a  graded differential algebra by adding symmetric powers of even elements and antisymmetric powers of odd elements. The resulting structure is called the Batalin-Vilkovisky complex $\BV(M)$. It is the space of smooth compactly supported multilocal mappings
\be\label{BV}
\BV(M)=\Ci_\mc(\E(M),\A(M))\,,
\ee
where we denoted $\A(M):=\Lambda\E_c(M)\widehat{\otimes}S^\bullet \frakg(M)\widehat{\otimes}\Lambda {\frakgo}'(M)$. 
This graded algebra can be explicitly characterized as a space of distributional sections of a certain vector bundle over $M$ (see the Appendix \ref{topo} for details).
%: $\prod\limits_{k,l,m=0}^\infty\Gamma'(M^{n+k},V^{\otimes n}\otimes \Lambda^kg\otimes \Lambda^lV\otimes g^{\otimes m})$, where we used the fact that $\E_c(M)$ is the space of smooth compactly supported sections of some vector bundle with a finite dimensional fiber $V$.

The map $s$ defined in (\ref{s}) can be extended to a graded differential on $\BV(M)$ by means of the graded Leibniz rule. 
%In the language of graded differential geometry, $\BV(M)$ is the graded algebra of functions on the shifted cotangent bundle of the graded manifold whose algebra of functions is $\CE(M)$. 
%The grading of the resulting manifold is given by the total ghost number $\#\gh$. Shifted means that elements of the fiber have the grade opposite and shifted by one with respect to the corresponding base elements. 
The grading of $\BV(M)$ by the antifield number $\#\af$ is compatible with the product, but not with the antibracket. Also the differential $s$ is not homogeneous with respect to this grading.
We can now expand the differential $s$ with respect to the antifield number. The expansion has two terms: $s=s^{(-1)}+s^{(0)}$, which can be identified as follows:
\begin{itemize}
\item $s^{(-1)}$ is the Koszul-Tate differential providing the resolution of $\CE_S(M)$,
\item $s^{(0)}$ is the Chevalley-Eilenberg differential on   $\CE_S(M)$.
\end{itemize} 
This results in the following bicomplex structure\footnote{We omit the dependency on $M$, since all the maps are natural and can be written on the level of functors.}:\\
\[
\begin{CD}
\ldots@>s^{(-1)}>>\big(\La^2\V\oplus\fG\big) @>s^{(-1)}>>\V@>s^{(-1)}>>\F@>s^{(-1)}>>0\\ 
@.     @VV{s^{(0)}}V@VV{s^{(0)}}V@VV{s^{(0)}}V@.\\
\ldots@>s^{(-1)}>>{\Ci_\ml\big(\E,(\La^2\E_c\oplus\frakg)\widehat{\otimes}\frakgo'}\big)@>s^{(-1)}>>{\Ci_\ml\big(\E,\E_c\widehat{\otimes}\frakgo'\big)}@>s^{(-1)}>>{\Ci_\ml\big(\E,\frakgo'\big)}@>s^{(-1)}>>0
\end{CD}
\]
The first row of this bicomplex corresponds to the resolution of $\F_S(M)$. This can be easily seen, since  $s^{(-1)}$ on $\fG(M)$ is just $\rho_M$ and $\im(\rho_M)_{\fG(M)\rightarrow\V(M)}=\ke(\delta_S)_{\V(M)\rightarrow\F(M)}$. Moreover one can prove that $\F_0(M)$ in case of Yang-Mills theories is also generated by the equations of motion\footnote{For local functionals this was shown within the jet bundle formalism for example in \cite{HennBar}. This result can be generalized to more singular functionals, like the microcausal ones, by taking the sequential completion. See appendix \ref{topo} for details on the topology.}, i.e. $\F_0(M)=\im(\delta_S)_{\V(M)\rightarrow\F(M)}$.
Explicitly the first row of the bicomplex can be written as:
 \[
\ldots\rightarrow\La^2\V\oplus\fG\xrightarrow{\delta_S\oplus\rho}\V\xrightarrow{\delta_S}\F\rightarrow 0
\]
One can check that the 0-order homology of this complex is just $\F_S(M)$ and the higher homology groups are trivial.

Since $(\BV(M),s^{(-1)})$ is a resolution we can use a standard result in homological algebra \footnote{See \cite{Henneaux:1992ig} and references therein for the review and \cite{Richter,Weibel,Eisen} for the mathematical details.} and write the cohomology groups of $s$ in the form:
\[
H^k(\BV(M),s)=H^k(H_0(\BV(M),s^{(-1)}),s^{(0)})
\]
Because $H_0(\BV(M),s^{(-1)})=\CE_S(M)$ we have in particular:
\[
H^0(\BV(M),s)=H^0(\CE_S(M),s^{(0)})=\F^\inv_S(M)
\]
This shows that we can recover the information on gauge invariant on-shell functionals from the graded differential algebra $(\BV(M),s)$. Moreover, since all the steps were done in the covariant way, we conclude that $\BV$ can be made into a functor from $\Loc$ to $\dgA$.
%===========================
\subsection{Classical master equation}\label{CME}
%===========================
Finally we want to comment on one more aspect of the BV-construction, namely the \textit{classical master equation}. In our setting this has to be understood on the level of natural transformations. First recall that the generalized Lagrangians are natural transformations between the functors $\D$ and $\F_\loc$. In the BV construction we extended the space of functionals $\F(M)$ to the BV complex $\BV(M)$. Let $\BV_\loc(M)$ denote the linear subspace of  $\BV(M)$ consisting of functionals that are local with respect to all the variables\footnote{The definition of locality for the elements of $\CE(M)$ and $\BV(M)$ is given in the Appendix \ref{topo}.}.
We can generalize the notion of a Lagrangian to a natural transformation between the functors $\D$ and $\BV_\loc$. Let $\Nat( \D,\BV_\loc )$ denote the set of natural transformations\footnote{It was shown in \cite{Few} that this is indeed a small set.}. Since we want to include also products of local functions in our discussion, a structure more general than $\Nat( \D,\BV_\loc )$ is needed. Let $\D^k$ be a functor from the category $\Loc$ to the product category ${\Vect}^k$, that assigns to a manifold $M$ a $k$-fold product of the test section spaces $\D(M)\times\ldots\times \D(M)$. Let $\Nat(\D^k,\BV_\loc)$ denote the set of natural transformations from $\D^k$ to $\BV_\loc$. We define extended Lagrangians $L\in Lgr$ to be elements of the space $\bigoplus_{k=0}^\infty \Nat(\D^k,\BV_\loc)$ satisfying: $\supp(L_M(f_1,...,f_n))\subseteq \supp f_1\cup...\cup\supp f_n$ and the additivity rule in each argument. We can introduce on $Lgr$ an equivalence relation similar to (\ref{equ}). We say that $L_1\sim L_2$, $L_1,L_2\in\Nat(\D^k,\BV_\loc)$ if:
\be\label{equ2}
\supp((L_1-L_2)_M(f_1,...,f_k))\subset \supp(df_1)\cup...\cup\supp(df_k),\qquad\forall f_1,...,f_k\in\D^k(M)
\ee
The natural transformation $L^{\ex}:=L+\theta$ is an example of a generalized Lagrangian in $Lgr$. We call the corresponding equivalence class $S^{\ex}$ the \textit{extended action}. As noted before $sF=\{F,L_M^\ex(f)\}$, for $f\equiv 1$ on the support of $F$, $F\in \BV(M)$.  To make a contact with the standard approach we shall write  $L^{\ex}$ explicitly:
\[
L^{\ex}_M(f)=-\frac{1}{2}\int_M f\tr(F\wedge *F)+\int_M f\big(dC+\frac{1}{2}[A,C]\big)^I_\mu(x)\frac{\delta}{\delta A^I_\mu(x)}+\frac{1}{2}\int_M f [C,C]^I(x)\frac{\delta}{\delta C^I(x)}\,.
\]
This is the standard extension of the Yang-Mills action in the BV-formalism. The antibracket can be lifted to a bracket on $Lgr$ by the definition:
\be\label{ntbracket}
\{L_1,L_2\}_M(f_1,...,f_{p+q})=\frac{1}{p!q!}\sum\limits_{\pi\in P_{p+q}}\{{L_1}_M(f_{\pi(1)},...,f_{\pi(p)}),{L_2}_M(f_{\pi(p+1)},...,f_{\pi(p+q)})\}\,,
\ee
where $P_{p+q}$ denotes the permutation group.
 The lifted antibracket is again graded antisymmetric and satisfies the graded Jacobi identity (\ref{Jacid}). It will be shown in section \ref{ntBV} that if we extend $Lgr$ to a graded algebra, (\ref{ntbracket}) satisfies also the Leibniz rule and is therefore a graded Poisson bracket.
 
The classical master equation extended to the natural transformations (ECME) can now be formulated as:
\be
\{L^{\ex},L^{\ex}\}\sim 0\,,
\ee 
with the equivalence relation defined in (\ref{equ2}). It guarantees the nilpotency of $s$ defined by $sF=\{F,L^{\ex}_M(f)\}$, where $f\equiv1$ on $\supp F$, $F\in \BV(M)$.
%=============================
\section{Gauge fixing}\label{gaugefixing}
%=============================
From the physical point of view, the BV-complex is an important structure since it makes it possible to perform gauge-fixing in the systematic way in the Lagrangian formalism. This point is usually not addressed in the mathematical literature, since the gauge fixing procedure involves some amount of arbitrariness. In physics one fixes the gauge to introduce the dynamical Poisson structure on the algebra of gauge invariant observables. This is the so called Peierls bracket \cite{Pei,Mar}. In the BRST-formalism one introduces a Peierls bracket first on the extended algebra and then shows that it is also well defined on the cohomology classes. This can be done systematically with the help of the Batalin Vilkovisky complex. 
 In the BV framework gauge fixing means eliminating the antifields by setting them equal to some functions of fields \cite{Batalin:1981jr,Henneaux:1989jq,Henneaux:1992ig,Froe}.
 
The gauge fixing is usually done in two steps. First one performs a transformation of  $\BV(M)$, that leaves the antibracket $\{.,.\}$ invariant. This provides us with the new extended action $\tilde{S}^{\ex}$ and a new differential $\tilde{s}$. Since the transformation leaves the antibracket invariant, we have an isomorphism of the cohomology classes $H^0(\BV(M),s)\cong H^0(\BV(M),\tilde{s})$. In the second step we want to set the antifields to 0. This can be done in a systematic way by introducing a new grading on $(\BV(M)$, the so called \textit{total antifield number} $\#\ta$. It has value $0$ on fields and value $1$ on all antifields. Next we expand the differential $\tilde{s}$ with respect to this new grading: $\tilde{s}=\delta^g+\gamma^g+\ldots$ (for Yang-Mills theories this expansion has only two terms). From the nilpotency of $\tilde{s}$ it follows that $\delta^g$ is a differential and $\gamma^g$ is a differential modulo $\delta^g$. Moreover $\delta^g$ can be interpreted as the Koszul map corresponding to the so called "gauge-fixed action" $S^g$. We have to choose the canonical transformation of $\BV(M)$ in such a way that this extended action doesn't have nontrivial symmetries. In this case the Koszul map  $\delta^g$ provides a resolution and using the main theorem of homological perturbation theory \cite{Henneaux:1992ig,Barnich:1999cy} one can conclude that:
\be\label{gfix}
H^0(\BV(M),\tilde{s})=H^0(H_0(\BV(M),\delta^g),\gamma^g)\,.
\ee
The r.h.s of (\ref{gfix}) is called the \textit{gauge-fixed cohomology}. It was discussed in details in \cite{Barnich:1999cy}. It was argued in \cite{Henneaux:1992ig,Barnich:1999cy} that we can view $H_0(\BV(M),\delta^g)$ as the Koszul-Tate resolution for the action $\tilde{S}^{\ex}$ where antifields are set to 0. 
%%%%%%%%%%%%%%%%%%%%%%%%%%%%%
\subsection{Nonminimal sector}
We describe now the above procedure in details on the example of Yang-Mills theories. There is one more remark that has to be made at this point. To implement the usual gauge fixing (for example the Lorenz one) we need first to introduce Lagrange multipliers. In the spirit of classical Lagrangian field theory, these are auxiliary, non-physical fields, that have to be eliminated at the end by performing a quotient of the field algebra. In the homological framework we have to introduce them in a way, that would not change the cohomology classes of $s$. The natural way to do it is to extend $\BV(M)$ by so called contractible pairs. 

Two elements $a$, $b$ of the cochain complex with a differential $d$ form a contractible pair if $a=db$ and $a\neq0$. Let $a$ be of degree $n$ and $b$ of degree $n-1$. Since $H^n(d)=\ke(d_n)/\im(d_{n-1})$, $a$ and $b$ are mapped to the trivial elements  $[0]$ of the cohomology classes. This observation provides us with a method to add Lagrange multipliers to the BV-complex. For concreteness we take the Lorenz gauge $G(A)=\hinv d*\!A$, where $*$ denotes the Hodge dual and $G$ is a map from $\E(M)$ to $\Ci(M,g)$. This suggests that the Lagrange multipliers (also called Nakanishi-Lautrup fields) should be elements of $\Ci(M,g')$ which we can also identify with $\Ci(M,g)$ because of the duality on $g$. Therefore we extend the BV-complex by tensoring with the space: $S^\bullet\frakgo'(M)$, which is interpreted as functionals of the  Lagrange multipliers and have grade $\#\gh=0$. Together with this space we introduce the space $\La\frakgo'(M)=\bigoplus\limits_{k=0}^\infty\La^k\frakgo'(M)$. These are the functionals of the so called \textit{antighosts} and have $\#\gh=-k$. They form trivial pairs with Nakanishi-Lautrup fields if we define: $sF=0$, and $sG=\Pi G\circ m_i$ for $F\in S^1\frakgo'(M)$, $G\in\La^1\frakgo'(M)$, where $\Pi$ denotes the grade shift by $+1$ and $m_i$ the multiplication of the argument by $i$. The last operation is just a convention used in physics to make antighosts hermitian. We use it to stay consistent with the literature. Together with antighosts and Nakanishi-Lautrup fields we can introduce the corresponding antifields (derivations of  $S^\bullet\frakgo'(M)$ and $\La\frakgo'(M)$). The full nonminimal sector is of the form:
\[
\Nm(M)=\La \frakgo' [-1]\widehat{\otimes}S^\bullet\frakgo'(M)[0]\widehat{\otimes}S^\bullet{\frakg}(M)[0]\widehat{\otimes}\La {\frakg}[-1]\,,
\]
where we indicated the grades explicitly in brackets. The new BV-complex $\BVn(M)$ consists of compactly supported multilocal maps $\Ci_\ml(\E(M),\Nm(M)\widehat{\otimes}\A(M))$ with the BV-differential $s$ defined above. It can be seen that $sF=\{F,L^\ex(f)\}$, where $f\equiv 1$ on $\supp F$, $F\in \BVn(M)$ and the extended Lagrangian is now:
\begin{multline}\label{Snm}
L^{\ex}_M(f)=-\frac{1}{2}\int_M f\tr(F\wedge *F)+\int_M f\big(dC+\frac{1}{2}[A,C]\big)^I_\mu(x)\frac{\delta}{\delta A^I_\mu(x)}+\\
+\frac{1}{2}\int_M f [C,C]^I(x)\frac{\delta}{\delta C^I(x)}-i\int_M f B_I(x)\frac{\delta}{\delta \bar{C}_I(x)}\,.
\end{multline}
The last term corresponds to the action of $s$ on the non-minimal sector and we used the traditional notation $B$ for Nakanishi-Lautrup fields and $\bar{C}$ for the antighosts\footnote{Expression (\ref{Snm}) is a little bit formal. We can make it precise if we treat $B^I(x)$ as an evaluation functional on the space of Lagrange multipliers ${\frakgo}(M)$, i.e. $B^I(x)\in S^1\frakgo'(M)$. In this sense elements of $S^\bullet\frakgo'(M)$ can be seen as integrals $F=\int dx_1...dx_n f_{a_1...a_n}(x_1,...,x_n) B^{a_1}(x_1)\otimes...\otimes B^{a_n}(x_n)$. Similarly we can write $G\in \La\frakgo'(M)$ as $G=\int dx_1...dx_n g_{a_1...a_n}(x_1,...,x_n) \bar{C}^{a_1}(x_1)\wedge...\wedge  \bar{C}^{a_n}(x_n)$, for evaluation functional $\bar{C}^{a_n}(x_n)$. Then the action of $s$ on the non-minimal  sector can be written as: $sB^I(x)=0$, $s\bar{C}^I(x)=iB^I(x)$. The factor $i$ is only a convention.
 If we identify $\frac{\delta}{\delta \bar{C}^I(x)}$ with the derivation that acts on $\La\frakg(M)$ as the left derivative (see \cite{Rej} for the detailed discussion), we arrive at the expression (\ref{Snm}).}.
%%%%%%%%%%%%%%%%%%%%%%%%%
\subsection{Gauge fixing for the Yang-Mills theory}
Now we turn back to the gauge-fixing. 
Let $\psi\in\BVn(M)$ be a fixed algebra element of degree $\#\gh=-1$ and  $\#\af=0$. 
Using $\psi$ we define a linear transformation $\alpha_\psi$ on $\BVn(M)$ 
by
\be\label{gfermion}
%\alpha_\psi(X):=\sum_{n=0}^{\infty}\frac{1}{n!}\{\dots\{X,\underbrace{\psi\},\dots,\psi\}}_n\,,
\alpha_\psi(X):=\sum_{n=0}^{\infty}\frac{1}{n!}\underbrace{\{\psi,\dots,\{\psi}_n,X\}\dots\}\,,
\ee
The antibracket with $\psi$ preserves the ghost number $\#\gh$ and lowers the antifield number $\#\af$ by 1. Hence the sum in \eqref{gfermion} is finite and $\alpha_\psi$ preserves the grading with respect to the ghost number. Moreover, it preserves as well the product (as a consequence of the Leibniz rule for the Schouten bracket) as well as the Schouten bracket itself (as a consequence of the Jacobi identity).  

Let now $\Psi$ be a natural transformation from $\D$ to $\BVn$ such that $\Psi_M(f)$ satisfies the conditions which were stated on $\psi$ above. $\Psi$ is called \textit{the gauge fixing fermion}.
We define an automorphism on $\BVn(M)$ by
\[
\alpha_{\Psi_M}(X)=\alpha_{\Psi_M(f)}(X),\qquad X\in\BVn(M)\,
\]
where $f\equiv1$ on the support of $X$.

Let $\tilde{L}^{\ex}=\alpha_\Psi\circ L^\ex$ be the transformed generalized Lagrangian. 
We define a new BV-operator as $\tilde{s}F:=\{F,\tilde{L}^\ex_M(f)\}$, for $f\equiv 1$ on $\supp F$, $F\in\BVn(M)$ . We have:
\[
\F_S^{\inv}(M)\cong H^0(\BVn(M),\tilde{s})\,,
\]
where the isomorphism is given by means of $\alpha_\Psi$. 

For the Lorenz gauge we choose the gauge-fixing fermion of the form:
\be\label{Lorenz}
\Psi_M(f)=i\int\limits_M f\left(\frac{\alpha}{2}\kappa(\bar{C},B)+\left<\bar{C},*d*\!A\right>_g\right)\dvol\,,
\ee
where $\kappa$ is the Killing form on the Lie algebra $g$ and $\left<.,.\right>_g$ is the dual pairing between $g$ and $g'$.

In the second step of the gauge fixing procedure we expand the differential $\tilde{s}$ with respect to the total antifield number: $\tilde{s}=\delta^g+\gamma^g$, where $\delta^g$ lowers $\#\ta$ by 1 and $\gamma^g$ preserves it. Let $X\in\BVn$ be a derivation of total antifield number $\#\ta=1$. The action of $\delta^g$ on $X$ is given by:
\be
\delta^gX=\{X,\tilde{L}^\ex_M(f)\}\Big|_{\#\ta=0\atop \textrm{terms}}=\{X,L^g_M(f)\},\qquad f\equiv 1\textrm{ on }\supp X\,,\label{deltag}
\ee
where $L^g$ is the so called gauge-fixed Lagrangian and is obtained from $\tilde{L}$ by putting all antifields to 0. The corresponding equivalence class $S^g$ is the gauge-fixed action. The ideal of $\BVn(M)$ generated by all terms of the form (\ref{deltag}) is the graded counterpart of the ideal of $\F(M)$ generated by the equations of motion. In the next section we shall see that one can introduce a notion of a derivative on $\BVn(M)$ which makes this correspondence precise. In this sense the $0$-order homology of $\delta^g$ is the algebra of on-shell functions for the gauge-fixed action $S^g$. For Yang-Mills theory the gauge-fixed Lagrangian reads:
 \[
L_M^g(f)=S_M(f)+\gamma^g\Psi_M(f)\,.
\]
In case of the Lorenz gauge we obtain:
 \be
L_M^g(f)=-\frac{1}{2}\int\limits_M f\tr(F\wedge *F)-i\int\limits_M f\tr[d\bar{C},*D C]-\int\limits_M f\left(\frac{\alpha}{2}\kappa(B,B)+\left<B,\hinv d*\!A\right>_g\right)\dvol\,.\label{fixed}
\ee

The differential $\gamma^g$ is called \textit{the gauge-fixed BRST differential}. 
The action of the gauge-fixed BRST differential on the functions in $\BVn(M)$ is summarized in the table below.\\
\begin{center}
{\setlength{\extrarowheight}{2.5pt}
\begin{tabular}{ll}
\toprule%
& $\gamma^g$\\\otoprule%
$F\in\F(M)$&$\left<F^{(1)},dC+[.,C]\right>$\\%\cmidrule(l){2-2}
 $C$&$-\frac{1}{2}[C,C]$\\
 $B$& $0$\\
  $\bar{C}$& $iB$\\\bottomrule
\end{tabular}}
\end{center}
%================
\section{Peierls bracket}\label{Peierls}
%================
We come finally to the discussion of the dynamical structure. As already mentioned in Section \ref{gaugefixing}, we first need to fix the gauge, before we can define the Peierls bracket, that implements the dynamics. As discussed in  Appendix \ref{topo} the space of multilocal functionals is not closed under the Peierls bracket. To fix this we replace it by the space of the space $\F_\mc(M)$ of \textit{microcausal functionals}, equipped with the topology $\tau_\Xi$, described in Appendix \ref{topo}.  Multilocal functionals are dense in $\F_\mc(M)$ with respect to this topology. Using this space as a starting point one can repeat the construction of the BV-complex given in section \ref{KT} with some technical changes discussed in Appendix \ref{topo}. The elements of $\BV(M)$ are now microcausal vector-valued functions\footnote{The notion of microcausal $\CC$-valued functionals was introduced in \cite{BFK95}, see also  \cite{BDF,BFR}. For the definition of microcausal elements of the BV-complex see Appendix \ref{topo}.}: $\BV(M)=\Ci_\mc(\E(M),\A)$. For Yang-Mills theory in the BV complex extended by the nonminimal sector $\A$ is of the form:
%\[
%\begin{array}{rclclclcl}
%\A&=\prod\limits_{k,l,m=0}^\infty\Gamma'_{\Xi_{n}}(M^n,&g^{\otimes k}&\otimes&\Lambda^{l}g&\otimes&\Lambda^{m}g&\otimes&\textrm{Antifields}),\quad n=k+l+m\\
%    &  &\multicolumn{2}{l}{\textrm{Lagrange}  }&\multicolumn{2}{c}{\textrm{antighosts}}&\multicolumn{2}{l}{\textrm{ghosts}}&\\
%    &	&\multicolumn{2}{l}{\textrm{multi-}} &\multicolumn{2}{c}{}&\multicolumn{2}{l}{}                           &\\
%    &	&\multicolumn{2}{l}{\textrm{pliers}} &\multicolumn{2}{c}{}&\multicolumn{2}{l}{}                           &\\
%\textrm{$\#\gh$}& & 0& &-1&&1&&-\#\af\\
%\end{array}
%\]
\[
\A=\prod\limits_{k,l,m=0}^\infty\Gamma'_{\Xi_{n}}(M^n,S^kg\otimes\Lambda^{l}g\otimes\Lambda^{m}g\otimes\textrm{Antifields}),\quad k+l+m=n\,.
\]
In the above formula $\Xi_n$ denotes the open cone $\{(x_1,...,x_n,k_1,...k_n)| (k_1,...k_n)\notin (\overline{V}_+^n \cup \overline{V}^n_-)\}$ and for a finite dimensional vector space $W_n$, $\Gamma'_{\Xi_n}(M^n,W_n)$ is the subspace of $\Gamma'(M^n,W_n)$ consisting of distributions with wave front set contained in the open cone $\Xi_n$. The first three factors in the space $W_n$ correspond accordingly to the Lagrange multipliers, antighosts and ghosts. The subsequent factors are the antifields.

The fact that gauge fixing is possible implies that if we keep all the unphysical fields with fixed values, then the initial value problem is well posed for the physical fields. This is however not enough. Since the differential $\delta^g$ is the Koszul map for the gauge fixed action, we should understand the equations of motion as equations for the full field multiplet with the auxiliary fields included. The functional derivative of a function $F\in\Ci_\mc(\E(M),\A)$ at the point $A_0\in\E(M)$ is an $\A$-valued distribution: $F^{(1)}_A(A_0):=\frac{\delta F}{\delta A}(A_0)\in \E'(M)\widehat{\otimes}\A$. The details on the involved topologies are given in Appendix \ref{topo}. The derivatives with respect to the odd variables are defined pointwise.  For example for the functions of ghost fields we have: $F^{(1)}_C(A_0):= \big(F(A_0)\big)^{(1)}_C$, $F(A_0)\in\A$, where the derivative  on the graded algebra $\A$ is defined as in \cite{Rej}:
\be\label{d1}
F^{(1)}(a)[h]:=F(h\wedge a)\quad F\in \La^p{\frakgo}'(M),\ a\in\La^{p-1}\frakgo(M),\ h\in\frakgo(M)\ p>0
\ee
Note that $F^{(1)}_C(A_0)\in{\frakgo}'(M)\widehat{\otimes}\A(M)$. Now to implement the equations of motion we take the quotient of $\BV(M)$ by the ideal generated by graded functions of the form:
\be\label{eomf}
A_0\mapsto\langle {S^g}_\alpha^{(1)}(A_0), \beta(A_0)\rangle\,,
\ee
where $A_0\in\E(M)$, $\alpha$ is  $A,B,C$ or $\bar{C}$ and $\beta(A_0)$ is the appropriate test section. Note that in the graded case, when $S^g$ is of degree higher than $1$ in anticommuting variables, we don't have an interpretation of the equations of motion as equations on the configuration space. Instead the algebraic definition on the level of functionals can still be applied \cite{Rej}. We can compare this situation to the purely bosonic case, when we had to show that the ideal $\F_0(M)$ is generated by the equations of motion for a given action functional. In the fermionic case we reason in the opposite direction and \textit{define} this ideal as generated by the elements of the form (\ref{eomf}). Equivalently we can say that it is the image of the map $\delta^g$ acting on $\#\ta=1$-grade derivations, so the on-shell functionals for the action $S^g$ are indeed characterized by $H_0(\A(M),\delta^g)$.

We conclude that after the gauge fixing the full dynamics is described by the action $S^g$ and therefore this generalized Lagrangian is the starting point for the construction of the Peierls bracket. The off-shell formalism \cite{DF04,DF02} is to be understood with respect to $S^g$ and going on-shell means taking the quotient by the ideal generated by the equations of motion. The construction of the Peierls bracket is a straightforward generalization of the construction done in the scalar case \cite{BDF,BFR}. The only subtle point is the grading. We discussed the general case of Peierls bracket for anticommuting fields in \cite{Rej}. All the distributional operations have to be generalized to distributions with values in a graded algebra. This can be easily done, because we equipped $\A(M)$ with a nuclear, sequentially complete topology. The distributional operations like convolution, contraction and pointwise product generalized to the $\A(M)$-valued distributions are now graded commutative (see the Appendix \ref{vvalued} and \cite{Rej} for details). As in the case of $\RR$-valued distributions, the pointwise product is well defined only when the sum of the wave front sets of the arguments does not intersect the zero section of the cotangent bundle. We point out that the use of $\A(M)$-valued distributions already accounts for the grading, so there is no need to introduce additional Grassman algebras by hand.

Since $S^g$ has at most quadratic terms with respect to the anticommuting variables, its second derivative can be again treated as an operator on the extended configuration space \cite{Rej}. To construct the Peierls bracket we need this operator to be strictly hyperbolic. Therefore we need to find a gauge fixing Fermion which makes the linearized equations of motion of $S^{g}$ into a strictly hyperbolic system in variables $A,B,C$ and $\bar{C}$. The existence of such a Fermion in a general case is an open question. In case of Yang-Mills theory it suffices to take the Lorenz gauge with $\Psi$ given by (\ref{Lorenz}). Taking the first functional derivative of $S^g$ results in a following system of equations\footnote{These equations should be understood as relations in the algebra $\A(M)$, that we have to quotient out. For example (\ref{1YM}) means that we quotient out the ideal generated (in the algebraic and topological sense) by evaluation functionals $(\hinv D\!*\!F-dB-i[d\bar{C}, C])^I_\mu(x)$.\label{eqs}}:
\begin{eqnarray}
\hinv D\!*\!F=\hinv D\!*\!DA&=&-dB-i[d\bar{C}, C]\label{1YM}\,,\\
\hinv d\!*\!A+\alpha B&=&0\,,\label{gaugecond}\\
\hinv d\!*\!DC&=&0\,,\nonumber\\
\hinv D\!*\!d\bar{C}&=&0\,,\nonumber
\end{eqnarray}
where $D\omega=D+[A,\omega]$ denotes the covariant derivative. Acting with $\hinv D*$ on equation (\ref{1YM}) we obtain an evolution equation for $B$:
\begin{equation*}
\hinv D*dB=-i*[d\bar{C}, *DC]\,.
\end{equation*}
For every field from configuration space the second variational derivative of (\ref{fixed}) is an  integral kernel of a normal hyperbolic differential operator. Indeed, in the linearized system of equations the only terms containing second derivatives in (\ref{1YM}) are of the form $\hinv d*dA=\Box A-d\hinv d*A$. From the gauge fixing condition (\ref{gaugecond}) it follows that $\hinv d*A=-\alpha B$ and therefore the only contributions containing the second derivatives are of the form $\Box \phi^\alpha$, where $\phi^\alpha=A,C,\bar{C}$ or $B$. This means that ${S^g}''_M$ provides a hyperbolic system of equations and one can construct the advanced and retarded Green's functions $\Delta^R_{S^g}$, $\Delta^A_{S^g}$. 
We define the Peierls bracket by:
\begin{eqnarray}
\{F,G\}_{S^g}&\doteq& R_{S^g} (F,G)-A_{S^g}(F,G)\,,\label{peierls}\\
R_{S^g} (F,G)&\doteq&\sum\limits_{\alpha,\beta}(-1)^{(|F|+1)|\phi_\alpha|}\left<F^{(1)}_\alpha,(\Delta^R_{S^g})_{\alpha\beta}*G_\beta^{(1)}\right>\,,\label{ret}\\
A_{S^g} (F,G)&\doteq&\sum\limits_{\alpha,\beta}(-1)^{(|F|+1)|\phi_\alpha|}\left<F^{(1)}_\alpha,(\Delta^A_{S^g})_{\alpha\beta}*G_\beta^{(1)}\right>\,,\label{adv}
\end{eqnarray}
where $\Delta^{R/A}_{S^g}$ has to be understood as a matrix, $\phi^\alpha=A,C,\bar{C}$ or $B$ and $|.|$ denotes the ghost number. The sign convention chosen here comes from the fact that we use only left derivatives. One can show that $\{.,.\}_{S^g}$ is a well defined graded Poisson bracket on $\A(M)$. Moreover the algebra  $\A(M)$ is closed under this bracket. The next proposition shows that there is a relation between this dynamical structure and the BRST symmetry.
\begin{prop}
The BRST operator $\gamma^g$ satisfies the graded Leibniz rule with respect to the Peierls bracket:
\begin{equation}
\gamma^g\{F,G\}_{S^g}=(-1)^{|G|}\{\gamma^gF,G\}_{S^g}+\{F,\gamma^gG\}_{S^g}\,.\label{BRSideal}
\end{equation}
\end{prop}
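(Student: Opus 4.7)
The plan is to exploit the BRST invariance of the gauge-fixed action $S^g$ and to show that $\gamma^g$ commutes with the advanced/retarded Green's functions $\Delta^{R/A}_{S^g}$ modulo the ideal of equations of motion. Since the Peierls bracket is defined on the quotient $\A(M)$ by this ideal, this will deliver the Leibniz rule in a natural way.

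First I would establish BRST invariance of $S^g$. The canonical transformation $\alpha_\Psi$ preserves the antibracket, so the classical master equation $\{L^{\ex},L^{\ex}\}\sim 0$ transfers to $\{\tilde L^{\ex},\tilde L^{\ex}\}\sim 0$. Expanding the resulting nilpotency $\tilde s^{2}=0$ in total antifield number and projecting to the $\#\ta=0$ sector yields $\gamma^g S^g\sim 0$. Regarding $\gamma^g$ as acting by a field-dependent vector field $Q$ on the extended configuration space (whose components $Q^\alpha$ are read off from the table at the end of Section~\ref{gaugefixing}), this is the relation $\pa_Q S^g\sim 0$.

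Next I would differentiate this identity once more with respect to the field. Together with the fact that $P(\ph):={S^g}''(\ph)$ is a strictly hyperbolic operator, this yields the key intertwining relation
\[
P\,Q^{(1)} + \pa_Q P \equiv 0 \pmod{\im{S^g}{}'},
\]
i.e.\ $[\gamma^g,P]$ lies in the ideal generated by the equations of motion. Applying $\gamma^g$ to the defining identity $P\,\Delta^{R/A}_{S^g}=\1$ and using this relation together with the uniqueness of retarded/advanced fundamental solutions of a strictly hyperbolic operator (which depends only on strict hyperbolicity and the causal support condition), one concludes that $[\gamma^g,\Delta^{R/A}_{S^g}]$ likewise takes values in the ideal generated by ${S^g}{}'$. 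On the quotient $\A(M)$ these correction terms vanish.

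Finally I would apply $\gamma^g$ to the definition (\ref{peierls})--(\ref{adv}) as a graded derivation of the pointwise product on $\A(M)$. By the previous step the contributions from varying $\Delta^{R/A}_{S^g}$ drop out, while those from differentiating $F^{(1)}_\alpha$ and $G^{(1)}_\beta$ reassemble, via the chain rule $\gamma^g F^{(1)}_\alpha = (\gamma^g F)^{(1)}_\alpha - \pa_{Q^{(1)}}F$ (again up to ideal terms), into $(\gamma^g F)^{(1)}_\alpha$ and $(\gamma^g G)^{(1)}_\beta$. Tracking the signs $(-1)^{(|F|+1)|\phi_\alpha|}$ present in (\ref{ret})--(\ref{adv}) together with the degree-one shift induced by $\gamma^g$ produces the required factor $(-1)^{|G|}$ in front of $\{\gamma^g F,G\}_{S^g}$. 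The hard part is the BRST-equivariance of the Green's functions: one must combine the graded structure of $\A(M)$ (which mixes the bosonic sector $A,B$ with the fermionic sector $C,\bar C$) with the causal support of the propagators to ensure that $[\gamma^g,\Delta^{R/A}_{S^g}]$, a priori a bi-solution of the homogeneous linearized equation on the quotient, is forced to lie in the equations-of-motion ideal by its support and by the uniqueness of the retarded and advanced fundamental solutions.
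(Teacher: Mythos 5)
Your proposal is correct and follows essentially the same route as the paper: apply $\gamma^g$ as a graded derivation to the three factors $F^{(1)}_\alpha$, $\Delta^{R/A}_{S^g}$ and $G^{(1)}_\beta$ in (\ref{ret})--(\ref{adv}), and use the $\gamma^g$-invariance of $S^g$ (coming from the nilpotency of $\tilde s$) to control the term in which the Green's functions are varied. You are in fact more explicit than the paper about the intertwining relation for $\Delta^{R/A}_{S^g}$ holding only modulo the equations-of-motion ideal and about the chain-rule reassembly of $(\gamma^g F)^{(1)}$, details the paper leaves implicit.
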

\begin{proof}
From the definition of the BRST operator we know that it is a graded derivation on the algebra $\BV(M)$, acting from the right. Therefore it holds:
\begin{multline*}
\gamma^g\left<F^{(1)}_\alpha,({\Delta^R_{S^g}}_{\alpha\beta})*G_\beta^{(1)}\right>=(-1)^{|\phi_\alpha|+|G|}\left<\gamma^g\left(F^{(1)}_\alpha\right),({\Delta^R_{S^g}}_{\alpha\beta})*G_\beta^{(1)}\right>+\\
+(-1)^{|\phi_\beta|+|G|}\left<F^{(1)}_\alpha,\left(\gamma^g({\Delta^R_{S^g}})_{\alpha\beta}\right)*G_\beta^{(1)}\right>+\left<F^{(1)}_\alpha,({\Delta^R_{S^g}})_{\alpha\beta}*\gamma^g\left(G_\beta^{(1)}\right)\right>\,.
\end{multline*}
Now, using the fact that $S^g$ is $\gamma^g$-invariant (follows from the nilpotency of $\tilde{s}$), we obtain:
\[
\gamma^gR_{S^g}(F,G)=(-1)^{|G|}R_{S^g}(\gamma^gF,G)+R_{S^g}(F,\gamma^gG)\,.
\]
The same holds for $A_{S^g}(F,G)$, so the result follows from the definition (\ref{peierls}).
\end{proof}
Now we want to show that the Peierls bracket $\{.,.\}_{S^g}$ is well defined on the algebra of gauge invariant observables.
We recall that $\F^{\inv}_S(M)\cong H^0(H_0(\BV(M),\delta^g),\gamma^g)$ and $H_0(\delta^g,\BV(M))$ is the on-shell algebra of the gauge-fixed action $S^g$. It was proved in \cite{BFR} that for the scalar field the subalgebra of 
functionals that vanish on-shell is a Poisson ideal with respect to $\{.,.\}_S$. A similar reasoning can be applied also to the graded case and one shows that the image of $\delta^g$ in degree $\#\ta=0$ is a Poisson ideal with respect to
$\{.,.\}_{S^g}$. This means that the Peierls bracket is well defined on-shell, i.e. on the homology classes  $H_0(\delta^g,\BV(M))$. To see that it is also compatible with the differential $\gamma^g$ we consider $F,G\in \ke(\gamma^g)$ and from (\ref{BRSideal}) we conclude that
\begin{multline}\label{Poii}
\{F,G+\gamma^g H\}_{S^g}=\{F,G\}_{S^g}+\{F,\gamma^gH\}_{S^g}=\\
=\{F,G\}_{S^g}+\gamma^g\{F,H\}_{S^g}-(-1)^{|H|}\{\gamma^gF,H\}_{S^g}\\
=\{F,G\}_{S^g}+\gamma^g\{F,H\}_{S^g}\,.
\end{multline}
This shows that $\{.,.\}_{S^g}$ is compatible with the cohomology classes of $\gamma^g$. Using this result and the previous one, concerning the $0$-th homology of $\delta^g$, we conclude that the Peierls bracket is well defined on $\F^{\inv}_S(M)$. As a final remark, we note that for Yang-Mills theories the Poisson structure on $\F^{\inv}_S(M)$ defined by the gauge-fixed action doesn't depend on the gauge-fixing Fermion $\Psi$. Indeed, let $S^{g}_1=S+\gamma^g\Psi_1$, whereas  $S^{g}_2=S+\gamma^g\Psi_2$. Therefore $S^{g}_2=S^{g}_1+\gamma^g(\Psi_1-\Psi_2)$. It follows now that for $F,G\in \ke(\gamma^g)$ we have:
\begin{equation*}
\{F,G\}_{S^{g}_2}=\{F,G\}_{S^{g}_1}+\gamma^g(\ldots)\,.
\end{equation*}
It means that $\{F,G\}_{S^g_1}$ and $\{F,G\}_{S^g_2}$ are in the same cohomology class.

To end this section we discuss the functoriality of the construction presented above. Let $\PgAlg$ denote the category of graded topological Poisson algebras with continuous faithful graded Poisson algebra morphisms as morphisms. It can be shown that the assignment of $(\BV(M),\{.,.\}_{S^{g}})$ to $M$ is a covariant functor from $\Loc$ to $\PgAlg$. It is interesting to note, that on $\BV(M)$ we have two Poisson structures, one is the antibracket $\{.,.\}$ and the other one the Peierls bracket $\{.,.\}_{S^g}$. Both structures are natural and depend on the generalized Lagrangian $S$, defining the concrete classical theory. Moreover the algebra $\BV(M)$ is equipped with the graded differential $s$ and various gradings (antifield, ghost, \ldots). It would be interesting to investigate the relations between all these structures. 
%==================
\section{General relativity}
%==================
 %
 The treatment of quantum gravity in the framework of locally covariant quantum field theory was proposed in \cite{F,BF1}. The first step towards this program is the proper understanding of the structures that appear already on the classical level. This can be done using the BV-formalism described in the previous sections. It turns out that the direct application of the results known from the Yang-Mills theories is impossible, since there are no local diffeomorphism (i.e. gauge) invariant observables and the set $\F^\inv_S(M)$ would be trivial. We will show that the BV construction done for a fixed manifold $M$ leads to a trivial $0$-cohomology class. 
It turns out, however, that the framework of locally covariant field theory provides a solution. It was proposed in \cite{F,BF1} to consider gauge invariant \textit{fields} instead of observables. Here the fields are understood as natural transformations \cite{BFV,Few} between the functors $\E_c$ and $\F$. Let $\Nat( \E_c,\F )$ denote the set of natural transformations.
In analogy to Section \ref{CME} we define the set of fields as $\bigoplus\limits_{k=0}^\infty \Nat(\E_c^k,\F)$.
We will show that this is the right structure to consider as a starting point for the BV construction. Indeed in general relativity one always uses objects that are natural, for example the scalar curvature. Although it doesn't make sense to consider it at a fixed spacetime point, it is still meaningful to treat it as an object defined in all spacetimes in a coherent way. This is the underlying idea of identifying the physical quantities with natural transformations.
\subsection{BV construction on a fixed background}
We start with recapitulating the standard approach to the BRST construction made for general relativity. We shall perform it on a fixed background manifold $(M,g)$ and show that this leads to a trivial set of gauge invariant quantities.

For the classical gravity the configuration space is $\E(M)=(T^*M)^{2\otimes}\doteq T^0_2M$, the space of rank $(0,2)$ tensors. The Einstein-Hilbert action reads\footnote{In this chapter we use the metric signature $(-+++)$ and the conventions for the Riemann tensor agreeing with \cite{Wald}.}: 
\be\label{EH}
S_{(M,g)}(f)(h)\doteq \int R[g+h]f\,\textrm{d vol}_{(M,g+h)}\,,
\ee
where $g$ is the background metric, $h$ the perturbation and $\tilde{g}=g+h$. For every $g$ the local functional $S_{(M,g)}(f)(h)$ is defined in some open neighborhood $U_g\subset\E(M)$. We can make this neighborhood small enough to guarantee that $\tilde{g}$ is a Lorentz metric with the signature $(-+++)$. Since we are interested only in the perturbation theory, we don't need $S_{(M,g)}(f)(h)$ to be defined on the full configuration space. The diffeomorphism invariance of (\ref{EH}) means that the symmetry group of the theory is the diffeomorphism group $\Diff(M)$. Since we are interested only in local symmetries, we can restrict our attention to $\Diff_c(M)$. It is an infinite dimensional Lie group modeled on $\X_c(M)$, the space of compactly supported vector fields on $M$ \cite{Michor80,Michor,Gloe06,Gloe07}. We can now define the action of $\Diff_c(M)$ on $\E(M)$ or more generally on arbitrary tensor fields. Let $\Tens(M)$ denote the space of smooth sections of the vector bundle $\bigoplus\limits_{k,l}T^k_lM$, where $T^k_lM\doteq \underbrace{TM\otimes\ldots\otimes TM}_{k}\otimes \underbrace{T^*M\otimes\ldots\otimes T^*M}_{l}$. We define a map $\rho_M:\Diff_c(M)\rightarrow L(\Tens(M),\Tens(M))$ as a pullback, namely:
\begin{equation}
\rho_M(\phi)=(\phi^{-1})^*t,\quad \phi\in \Diff_c(M),\ t\in\Tens(M)
\end{equation}
The algebra of compactly supported vector fields $\X_c(M)$ on $M$ is the Lie algebra of $\Diff_c(M)$, and there exists an exponential mapping. The corresponding derived representation of $\X_c(M)$ on $\Tens(M)$  is just the Lie derivative:
\begin{equation}
\rho_M(X)t\doteq\frac{d}{dt}\Big|_{t=0}(\exp(-tX))^*t=\pounds_{X}t,\label{repre}
\end{equation}
where $X\in\X_c(M)$ and the last equality follows from the fact that the exponential mapping of the diffeomorphism group is given by the local flow. The most general nontrivial symmetries of the action (\ref{EH}) can be written as elements of $\Ci_\ml(\E(M),\X_c(M))$. Like in gauge theories one can define the action of $\X_c(M)$ on $\F(M)$, the space of functionals on the configuration space. It is given by: 
\be\label{rhoM3}
\partial_{\rho_M(X)}F(h)=\left<F^{(1)}(h),\pounds_{X}\tilde{g}\right>,\qquad F\in\F(M), X\in\X_c(M)\,.
\ee
The full BV-complex is defined by:
\be
\BV(M)=\Ci_\ml\big(\E(M),\La\E_c(M)\widehat{\otimes}\La{\Xo}'(M)\widehat{\otimes}S^\bullet \X_c(M)\big)\label{BVfix}
\ee
The BV differential can be now defined analogously to the gauge theories case as $s_0F=\{F,S_M(f)+\theta_M(f)\}$
where $f\equiv1$ on $\supp\, F$, $F\in\BV(M)$ and $\theta_M$ is constructed form the representation $\rho_M$. It can be seen already at this point that $H^0(\BV(M),s_0)$ is trivial, because there are no compactly supported diffeomorphism invariant functionals (see for example \cite{Weise} for a detailed discussion).
%========================================
\subsection{BV construction for natural transformations}\label{ntBV}
%========================================
Now we can come back to the question, how to define the BV differential on the level of fields, i.e. on $\bigoplus\limits_{k=0}^\infty \Nat(\E_c^k,\F)$. We have to find out how the natural transformations are transforming under spacetime diffeomorphisms. First we note that if $\alpha:M\rightarrow N$ is an isometric diffeomorphism, then the naturality condition implies that:
\[
\Phi_M(f)(\alpha^*h)=\Phi_{\alpha(M)}(\alpha_*f)(h)\,,
\]
where $f\in\E_c(M)$ and $h\in\E(\alpha(M))$. Therefore the basic consistency condition to impose on the action of symmetries on the natural transformations is:
\be
\rho_M(\alpha)\Phi_{(M,g)}=\Phi_{(\alpha(M),\alpha_*g)}\,,\label{consis}
\ee
where $\alpha\in\Diff_c(M)$. The infinitesimal version of this condition leads to the following action of symmetries on natural transformations:\footnote{Here we defined $\Xo$ to be a contravariant functor, using the fact that vector fields can be mapped to 1-forms using the metric and the forms can be subsequently pulled back by isometric embeddings.}
\begin{eqnarray}
(\rho_M(X)\Phi_M)(f)&:=&\partial_{\rho_M(X)}(\Phi_M(f))+\Phi_M(\rho_M(X)f)\label{repgrav}\\
&=&\partial_{\rho_M(X)}(\Phi_M(f))+\Phi_M(\pounds_Xf),\qquad X\in\Xo(M)\,.\nonumber
\end{eqnarray}
In other words, we first act with the representation of $X$ on the functional in $\F(M)$ and then on the test field $f\in\E_c(M)$. In the above formula $\rho(.)\Phi$ is a natural transformation between the functors $\D$ and $\Ci_\ml(\E,\Xo)$. 
\footnote{A related discussion of the proper choice of the BRST transformation for general relativity can already be found in a paper of Nakanishi \cite{Naka} (cf. also \cite{NaOji}).}

Once we identified the physical quantities with fields, the condition (\ref{consis}) already distinguishes the action (\ref{repgrav}) as the right starting point for the BV construction. The Chevalley-Eilenberg complex on the level of natural transformations is defined as $\bigoplus\limits_{k=0}^\infty \Nat(\E_c^k,\CE)$, where $\CE(M)$ is constructed analogously as in Yang-Mills theories. We can define the Chevalley-Eilenberg differential $\gamma=\gamma^{(0)}+\gamma^{(1)}$ by:
\begin{eqnarray*}
\gamma^{(0)}\Phi&=&\rho(.)\Phi,\qquad \Phi\in\Nat(\E_c,\F)\,,\\
\gamma^{(1)}\Phi&=&-\Phi\circ[.,.],\qquad\Phi\in\Nat(\E_c,\Xo')\,.
\end{eqnarray*}
Requirement of the graded Leibniz rule allows us to extend $\gamma$ to the whole $\bigoplus\limits_{k=0}^\infty \Nat(\E_c^k,\CE)$.
In case of general relativity it is convenient to choose as a space of test fields the space of compactly supported tensor fields $\Tens_c(M)$. Now we can repeat the construction of the BV complex on the level of natural transformations. We define the extended algebra of fields as:
\[
Fld=\bigoplus\limits_{k=0}^\infty \Nat(\E_c^k,\BV)\,,
\]
with $\BV(M)$ given by (\ref{BVfix}).
The set $Fld$ becomes a graded algebra if we equip it with a graded product defined as:
\be\label{ntprod}
(\Phi\Psi)_M(f_1,...,f_{p+q})=\frac{1}{p!q!}\sum\limits_{\pi\in P_{p+q}}\mathrm\Phi_M(f_{\pi(1)},...,f_{\pi(p)})\Psi_M(f_{\pi(p+1)},...,f_{\pi(p+q)})\,,
\ee
where the product on the right hand side is the product of the algebra $\BV(M)$.
We can also introduce on $Fld$ a graded bracket using definition (\ref{ntbracket}). This bracket is graded antisymmetric, satisfies the graded Jacobi identity (\ref{Jacid}) and the Leibniz rule (\ref{leibniz}) with respect to the product (\ref{ntprod}), so $(Fld,\{.,.\})$ is a graded Poisson algebra.
The $BV$-differential on $Fld$ is now given by:
\[
(s\Phi)_M(f):=s_0(\Phi_M(f))+(-1)^{|\Phi|}\Phi_M(\rho_M(.)f)\,,
\]
where $s_0$ is the differential defined in the previous section.
The $0$-cohomology of $s$ is nontrivial, since it contains for example the Riemann tensor contracted with itself, smeared with a test function:
\[
\Phi_{(M,g)}(f)(h)=\int\limits_M R_{\mu\nu\alpha\beta}[\tilde{g}]R^{\mu\nu\alpha\beta}[\tilde{g}] fd\textrm{vol}_{(M,\tilde{g})}\qquad \tilde{g}=g+h\,.
\]
With the general framework proposed above we can now treat more specific problems in general relativity. We claim that the physical quantities should be identified with the elements of $H^0(Fld,s)$. As natural transformations they define what it means to have the same physical objects in all spacetimes. In this sense we get a structure that is completely covariant. We can now introduce dynamics on $H^0(Fld,s)$ by defining the Poisson bracket. To have a better control on the wavefront set we shall use again the topology $\tau_\Xi$ and replace the multilocal with the microcausal functionals. The corresponding space of natural transformation is denoted by $Fld_{mc}$. We can construct the Peierls bracket analogously as in Section \ref{Peierls}. Our starting point is a Lagrangian that implements the differential $s_0$:
\be
L_M(f):=\int fR[\tilde{g}]\,d\textrm{vol}_{(M,\tilde{g})}+\int\!\!d\textrm{vol}_{(M,g)}( f\pounds_C \tilde{g}_{\mu\nu})\frac{\delta}{\delta h_{\mu\nu}}+\frac{1}{2}\int\!\!d\textrm{vol}_{(M,g)}(f[C,C]^\mu)\frac{\delta}{\delta C^{\mu}}\,.
\ee
To impose the gauge fixing we introduce the nonminimal sector. We shall do it already on the level of natural transformations. The functions of Nakanishi-Lautrup fields will be the elements of $\Nat(\E_c,S^\bullet\X')$ and functions of antighosts will belong to $\Nat(\E_c,\La\X')$. We can define the BV operator on the nonminimal sector simply as: $s\Phi_1:=\Pi\Phi_1\circ m_i$, $s\Phi_2=0$ for $\Phi_1\in \Nat(\E_c,\La^1\X')$, $\Phi_2\in \Nat(\E_c,S^1\X')$. To impose the gauge fixing we shall use a gauge fixing Fermion $\Psi\in Fld_{mc}$. It induces a transformation of $Fld_{mc}$ given by (\ref{gfermion}). This transformation is an isomorphism on the cohomology groups, since:
 \[
 (\tilde{s}\tilde{X})_M(f_1)=\widetilde{\{X_M(f_1),L_M(f_2)\}}+(-1)^{|X|}\tilde{X}_M(\rho_M(.)f_1)= \widetilde{(sX)}_M(f_1)\,,
 \]
where $X$ is a natural transformation with values in derivations, $\tilde{X}:=\alpha_\Psi(X)$ and $f_2\equiv 1$ on the support of $f_1$. The above result can be written more compactly as:
 \[
 \tilde{s}\tilde{X}=\widetilde{sX}
 \]
 To fix the gauge we have to choose $\Psi$. Since it has to be covariant, the most natural choice is the background gauge (see \cite{Ichi,NiOk}), i.e.:
 \[
 \Psi_{(M,g)}(f)=i\int\!\!\textrm{dvol}_{(M,g)}\left(\frac{\alpha}{2}\bar{C}_\mu B^\mu+\frac{1}{\sqrt{-g}}\bar{C}_\mu\nabla_\nu (\gt{\nu}{\mu})\right)=i\int\!\!g\big(\bar{C}, \frac{\alpha}{2}B+K(h)\big)\textrm{dvol}_{(M,g)}\,,
 \]
 where the indices are lowered in the background metric $g$, $\nabla$ is the covariant derivative on $(M,g)$ and we denoted $\tilde{\textfrak{g}}^{\nu\lambda}:=\sqrt{-\tilde{g}}\tilde{g}^{\nu\lambda}$, $K^\mu(h)=\frac{1}{\sqrt{-g}}\nabla_\nu \gt{\nu}{\mu}$. For $\alpha=0$ this is just the harmonic gauge. After putting antifields to $0$ we obtain a following form of the gauge-fixed Lagrangian:
 \[
 L^g_M(f):=\int fR[g+h]\,\textrm{d vol}_{(M,g+h)}+\int\!\!\textrm{dvol}_{(M,g)}f\left(ig\big(\bar{C},\frac{\delta K}{\delta h}[\pounds_C \tilde{g}]\big)-g\big(B, \frac{\alpha}{2}B+K(h)\big)\right)\,.
 \]
The differential $\tilde{s}$  can be expanded with respect to the total antifield number as $\tilde{s}=\delta^g+\gamma^g$, where $\delta^g$ is the Koszul differential of the gauge fixed action and $\gamma^g$ is the gauge-fixed BRST differential given by:
 \[
(\gamma^g\Phi)_M(f)=\gamma_0^g(\Phi_M(f))+(-1)^{|\Phi|}\Phi_M(\rho(.)f)\,,
 \]
 where:
 \begin{center}
{\setlength{\extrarowheight}{2.5pt}
\begin{tabular}{ll}
\toprule%
& $\gamma_0^g$\\\otoprule%
$F\in\F(M)$&$\left<F^{(1)},\pounds_C(g+.)\right>$\\%\cmidrule(l){2-2}
 $C$&$-\frac{1}{2}[C,C]$\\
 $B$& $0$\\
  $\bar{C}$& $iB$\\\bottomrule
\end{tabular}}
\end{center}
The algebra of physical microcausal fields can be recovered as $Fld_{ph}:=H^0(H_0(Fld_{mc},\delta^g),\gamma^g)$. To introduce the Poisson structure on it, we shall first do it on $Fld_{mc}$. We start with finding the field equations for the action $S^g$. We use the fact that in local coordinates:
 \[
 \pounds_C\gt{\mu}{\nu}=-\gt{\alpha}{\nu}\nabla_\al C^\mu-\gt{\alpha}{\mu}\nabla_\al C^\nu+\nabla_\alpha (C^\alpha\gt{\mu}{\nu})\,.
\]
The field equations\footnote{The field equations have to understood in the algebraic sense, see footnote \ref{eqs}.} take the form (compare with the equations in \cite{NiOk} obtained for the Minkowski background):
\begin{eqnarray*}
R_{\nu\lambda}[\tilde{g}]&=&-i\Big(\nabla_{(\nu}\bar{C}_{|\mu|}\nabla_{\la)}C^\mu+\nabla_\mu\bar{C}_{(\nu}\nabla_{\la)}C^\mu+(\nabla_\al\nabla_{(\nu}\bar{C}_{\la)})C^\al\Big)-\nabla_{(\la}B_{\nu)}\\%
% Ghost:
\tilde{\textfrak{g}}^{\nu\lambda}\nabla_{\nu}\nabla_{\lambda} C^\mu&=&\tilde{\textfrak{g}}^{\alpha\nu}R_{\lambda\nu\alpha}^{\quad\ \mu}[g]C^\lambda+\alpha\sqrt{-g}\Big(B^\lambda\nabla_\lambda C^\mu-\nabla_\lambda(B^\mu C^\lambda)\Big)\\%
% Antighost:
\tilde{\textfrak{g}}^{\nu\lambda}\nabla_{\nu}\nabla_{\lambda} \bar{C}_\mu&=&-\tilde{\textfrak{g}}^{\nu\lambda}R_{\nu\mu\lambda}^{\quad\ \alpha}[g]\bar{C}_\alpha+\alpha\sqrt{-g}\Big(B^\lambda\nabla_\lambda \bar{C}_\mu+B^\lambda \nabla_\mu \bar{C}_\lambda\Big)
\\%
% Gauge fixing:
\nabla_\nu \tilde{\textfrak{g}}^{\nu\mu}&=&-\alpha\sqrt{-g} B^\mu
\end{eqnarray*}
This system is gauge-fixed but not strictly hyperbolic in all the variables, since we have second derivatives of the ghosts in the first equation. This can be fixed by a suitable variable change. Before setting antifields to $0$ we perform a canonical transformation of the algebra $\BVn(M)$ by setting $b_\la=B_{\la}-iC^\al\nabla_{\al}\bar{C}_{\la}$. The antifields have to transform in such a way that the antibracket remains conserved, i.e.: $\frac{\delta}{\delta c_\la}=\frac{\delta}{\delta C_\la}+i\nabla^{\la}\bar{C}_{\bet}\frac{\delta}{\delta B_\bet}$ and $\frac{\delta}{\delta \bar{c}_\la}=\frac{\delta}{\delta \bar{C}_\la}-iC^\al\nabla_{\al}(.)\circ\frac{\delta}{\delta B_\la}$. All other variables remain unchanged. The new gauge-fixed Lagrangian takes the form:
 \begin{multline}
 L^g_M(f):=\int fR[\tilde{g}]\,d\textrm{vol}_{(M,\tilde{g})}+\\
 +\int\!\!d\textrm{vol}_{(M,g)}f\left(ig\big(\bar{c},\frac{\delta K}{\delta h}[\pounds_c \tilde{g}]\big)-g\Big(b+(ic^\al\nabla_{\al})\bar{c}, \frac{\alpha}{2}(b+(ic^\al\nabla_{\al})\bar{c})+K(h)\Big)\right)
 \end{multline}
The gauge-fixed BRST differential $\gamma^g_0$ is now defined as:
 \begin{center}
{\setlength{\extrarowheight}{2.5pt}
\begin{tabular}{ll}
\toprule%
& $\gamma_0^g$\\\otoprule%
$F\in\F(M)$&$\left<F^{(1)},\pounds_c(g+.)\right>$\\%\cmidrule(l){2-2}
 $c$&$-\frac{1}{2}[c,c]$\\
 $b$& $i(c^\bet\!\wedge c^\al\nabla_\bet\nabla_\al)\bar{c}+c^\al\nabla_\al b$\\
  $\bar{c}$& $ib-c^\la\nabla_\la \bar{c}$\\\bottomrule
\end{tabular}}
\end{center}
The equations of motion in the new variables can be written as:
\begin{eqnarray}
\tilde{R}_{\nu\lambda}&=&-i\nabla_{(\nu}\bar{c}_{|\mu|}\nabla_{\la)}c^\mu-\nabla_{(\nu}b_{\la)}+iR_{\al\nu\bet\la}\bar{c}^{(\bet} c^{\al)}\label{sys2}\,,\\%
% Ghost:
\tilde{\textfrak{g}}^{\nu\lambda}\nabla_{\nu}\nabla_{\lambda} c^\mu&=&\tilde{\textfrak{g}}^{\alpha\nu}R_{\lambda\nu\alpha}^{\quad\ \mu}c^\lambda+\alpha\sqrt{-g}\Big((b^\la+ic^\al\nabla_\al \bar{c}^\la)\nabla_\lambda c^\mu-\nabla_\lambda((b^\mu+ic^\al\nabla_\al \bar{c}^\mu) c^\lambda)\Big)\,,\nonumber\\%
% Antighost:
\tilde{\textfrak{g}}^{\nu\lambda}\nabla_{\nu}\nabla_{\lambda} \bar{c}_\mu&=&-\tilde{\textfrak{g}}^{\nu\lambda}R_{\nu\mu\lambda}^{\quad\ \alpha}\bar{c}_\alpha+\alpha\sqrt{-g}\Big((b^\la+ic^\al\nabla_\al \bar{c}^\la)\nabla_\lambda \bar{c}_\mu+(b^\la+ic^\al\nabla_\al \bar{c}^\la) \nabla_\mu \bar{c}_\lambda\Big)\,,\nonumber
\\%
% Gauge fixing:
\nabla_\nu \tilde{\textfrak{g}}^{\nu\mu}&=&-\alpha\sqrt{-g} (b^\mu+ic^\al\nabla_\al \bar{c}^\mu)\,,\nonumber
\end{eqnarray}
where we denoted $\tilde{R}_{\nu\la}:=R[\tilde{g}]_{\nu\la}$ and $R_{\al\bet\gamma\la}:=R[g]_{\al\bet\gamma\la}$. The equation for $b$ can be obtained from the first equation by means of the Bianchi identity. One can already see that after linearization we obtain a strictly hyperbolic system of equations since: $c^\la c^\al\partial_\lambda\partial_\al \bar{c}^\mu=0$ and all the other second order terms are of metric type.
For such a system retarded and advanced solutions of the linearized equations exist and one can define the Peierls bracket on $Fld_{mc}$. Like in case of Yang-Mills theories it is well defined also on $Fld_{ph}$ and we obtain a Poisson algebra $(Fld_{ph},\{.,.\}_{S^g})$. Although the Poissoin structure on $Fld_{mc}$ can depend on the choice of variables in the extended algebra, this doesn't affect the structure induced on $Fld_{ph}$. Note that  for the harmonic gauge ($\al=0$) and the Minkowski background the system (\ref{sys2}) simplifies to (compare with \cite{NaOji}):
\begin{eqnarray*}
\tilde{R}_{\nu\lambda}&=&-i\partial_{(\nu}\bar{c}_{|\mu|}\partial_{\la)}c^\mu-\partial_{(\la}b_{\nu)}\\%
% Ghost:
\Box_{\tilde{g}}c^\mu&=&0\\%
% Antighost:
\Box_{\tilde{g}}\bar{c}_\mu&=&0\\%
% B field:
\Box_{\tilde{g}}b_\mu&=&0\\%
% Gauge fixing:
\partial_\nu \tilde{\textfrak{g}}^{\nu\mu}&=&0
\end{eqnarray*}
%
%===============
\section{Conclusions}
%===============
% 
In this paper we developed the BV formalism for  locally covariant classical
field theory. We showed that the structure can be understood in terms of the geometry of the configuration space which, in typical cases, is the space of sections of a bundle over a globally hyperbolic spacetime. We took seriously the facts that the configuration space is an infinite dimensional differential manifold, modeled over a suitable locally convex space, and that the underlying spacetimes are never compact. We also did not restrict ourselves to spacetimes with compact Cauchy surfaces.

We analyzed in details the topological and functional analytic aspects of the construction. To achieve this we used the methods of calculus on locally convex vector spaces. This research area in mathematics undergoes now a very dynamical development and there are many interesting results that can be applied in physics. In the future we wish to investigate these issues in more detail. Up to now the mathematically rigorous treatment of the BV formalism was done mainly in a purely algebraic framework, when the topological aspects are neglected. 

It turned out to be crucial that all constructions are functorial. In particular the actions are not elements of the algebra of observables over a given spacetime, instead they have to be considered as natural transformations between suitable functors. The BRST transformations are locally, but not globally inner derivations of the graded Poisson algebra  of the BV complex. We also showed how the classical master equation 
can be formulated on the level of natural transformations. We want to stress that the requirement of covariance with respect to the isometric spacetime embeddings automatically forces the extended action to be a natural transformation rather than an element of the algebra of functionals. This is again an indication that the structures we are using are natural and provide the right formulation of classical field theory.

The thorough distinction between local and global aspects and the emphasis on functoriality pays off in the case of general relativity. There the BV complex for a fixed spacetime turns out to have trivial cohomology, in agreement with the nonexistence of local observables in gravity. But on the level of locally covariant fields, considered as suitable natural transformations between functors on the underlying category of spacetimes, the BV complex has a nontrivial cohomology which contains the expected observables, i.e. curvature and the related quantities. The theory is no longer a theory on a fixed spacetime, but depends only on the chosen class of spacetimes.

We hope that the framework we developed will provide a basis for a conceptually consistent approach to quantum gravity.
\section*{Acknowledgements}
We would like to thank Ch. Wockel, R. Brunetti, P. Lauridsen Ribeiro and J.-Ch. Weise for enlightening discussions and comments. K. R. is also grateful to the Villigst Stiftung for the financial support of the Ph.D.    
\appendix
%%%%%%%
%
\section{Appendix}
%===========================================
\subsection{Calculus on locally convex vector spaces}\label{iddg}
%===========================================
Let $X$ and $Y$ be topological vector spaces, $U \subseteq X$ an open set and $f:U \rightarrow Y$ a map. The derivative of $f$ at $x$ in the direction of $h$ is defined as
\be\label{de}
df(x)(h) \doteq \lim_{t\rightarrow 0}\frac{1}{t}\left(f(x + th) - f(x)\right)
\ee
whenever the limit exists. The function f is called differentiable at $x$ if $df(x)(h)$ exists for all $h \in X$. It is called continuously differentiable if it is differentiable at all points of $U$ and
$df:U\times X\rightarrow Y, (x,h)\mapsto df(x)(h)$
is a continuous map. It is called a $C^1$-map if it is continuous and continuously differentiable. Higher derivatives are defined for $C^n$-maps by 
\be
d^n f (x)(h_1 , \ldots , h_n ) \doteq \lim_{t\rightarrow 0}\frac{1}{t}\big(d^{n-1} f (x + th_n )(h_1 , \ldots, h_{n-1} ) -
 d^{n-1}f (x)(h_1 , \ldots, h_{n-1}) \big)
 \ee
The derivative defined by (\ref{de}) has many nice properties. It is shown for example in \cite{Neeb,Ham}, that the following proposition is valid:
\begin{prop}
Let $X$ and $Y$ be locally convex spaces, $U \subseteq X$ an open subset, and $f:U \rightarrow Y$ a continuously differentiable function. Then:
\begin{enumerate}
\item	For any $x \in U$ , the map $df(x): X \rightarrow Y$ is real linear and continuous.
\item (Fundamental Theorem of Calculus). If $x + [0, 1]h \subseteq U$ , then 
\[
f (x + h) = f (x) +\int\limits_0^1 df (x + th)(h) dt\,.
\]
\item $f$ is continuous.
\item If $f$ is $C^n$, $n \geq 2$, then the functions $(h_1,...,h_n) \mapsto d^nf(x)(h_1,...,h_n)$, $x \in U$, are
symmetric $n$-linear maps.
\item If $x + [0, 1]h \subseteq U$,then we have the Taylor Formula:
\begin{multline*}
f (x + h) = f (x) + df (x)(h) + \ldots+\frac{1}{(n-1)!}d^{n-1} f (x)(h,\ldots, h)+\\
+\frac{1}{(n-1)!}\int\limits_0^1(1-t)^{n-1}d^nf(x+th)(h,...,h)dt\,.
\end{multline*}
\end{enumerate}
\end{prop}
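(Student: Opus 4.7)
The plan is to establish the five parts in a logical order that reuses the conclusions of each step in the subsequent ones, following the classical strategy for Bastiani-type calculus on locally convex vector spaces.

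For part (1), I would first prove real homogeneity directly from the defining limit \eqref{de}: for $\lambda\neq 0$ the substitution $t\mapsto t/\lambda$ in the difference quotient gives $df(x)(\lambda h)=\lambda\, df(x)(h)$, and the case $\lambda=0$ is immediate. Additivity requires more care; the standard trick is to introduce the auxiliary map $g(s,t)\defi f(x+sh_1+th_2)$ and compute $\tfrac{d}{du}g(u,u)|_{u=0}$ in two ways, using continuity of $df$ to exchange limits. Continuity of $h\mapsto df(x)(h)$ will follow from the joint continuity of $df:U\times X\to Y$ by fixing the first argument.

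For part (2), I would consider the curve $\gamma:[0,1]\to Y$, $\gamma(t)\defi f(x+th)$. A direct computation from \eqref{de} shows $\gamma'(t)=df(x+th)(h)$, which is continuous in $t$ because $df$ is continuous. The Fundamental Theorem of Calculus for continuous curves into a locally convex space (defined via the weak integral, equivalently the Riemann integral when it exists; here continuity suffices on the compact interval) then yields $\gamma(1)-\gamma(0)=\int_0^1\gamma'(t)\,dt$. Part (3) is then an immediate corollary: given $x\in U$, for $h$ small enough that $x+[0,1]h\subset U$, the FTC formula expresses $f(x+h)-f(x)$ as an integral of a jointly continuous function in $(t,h)$, which tends to $0$ as $h\to 0$.

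For part (4), the symmetry of $d^n f(x)$, I would proceed by induction, reducing to the case $n=2$: I expect the symmetry $d^2 f(x)(h_1,h_2)=d^2 f(x)(h_2,h_1)$ to be the main obstacle. The standard argument considers the second-order difference $\Delta(s,t)\defi f(x+sh_1+th_2)-f(x+sh_1)-f(x+th_2)+f(x)$, which is manifestly symmetric under $(s,h_1)\leftrightarrow(t,h_2)$; applying the FTC from part (2) twice expresses $\Delta(s,t)$ as $\int_0^s\!\!\int_0^t d^2 f(x+uh_1+vh_2)(h_1,h_2)\,dv\,du$, and dividing by $st$ and letting $s,t\to 0$ gives $d^2 f(x)(h_1,h_2)$; symmetry of $\Delta$ combined with the reversed iterated integral gives the result. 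The inductive step applies this to the $C^{n-1}$ map $d^{n-2}f$ in the last two slots and uses the induction hypothesis in the first $n-2$ slots. Multilinearity then follows from part (1) applied slot by slot, using symmetry to transfer linearity in the last slot to all slots.

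For part (5), I would prove the Taylor formula by induction on $n$. The base case $n=1$ is precisely the FTC of part (2). For the inductive step, I would integrate by parts in the remainder term $R_{n-1}=\tfrac{1}{(n-2)!}\int_0^1(1-t)^{n-2}d^{n-1}f(x+th)(h,\ldots,h)\,dt$, differentiating $t\mapsto d^{n-1}f(x+th)(h,\ldots,h)$ via the chain rule (whose applicability in this setting I would cite from Neeb or Hamilton) to produce $d^n f(x+th)(h,\ldots,h)$, and integrating the factor $(1-t)^{n-2}$ to $-\tfrac{1}{n-1}(1-t)^{n-1}$. The boundary term evaluates to $\tfrac{1}{(n-1)!}d^{n-1}f(x)(h,\ldots,h)$ and the bulk term to the new remainder $R_n$, completing the induction. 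Throughout, the integrals are well-defined locally convex-valued Riemann integrals of continuous curves on a compact interval, so no completeness issue arises beyond what is already built into the definition of $C^n$.
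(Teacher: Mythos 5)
The paper itself offers no proof of this proposition: it is quoted as a known result, with the argument deferred to the cited references \cite{Neeb,Ham}. So there is no in-paper proof to compare against; what your proposal does is reconstruct the standard development found in exactly those sources (Neeb's Monastir lectures, Hamilton's Nash--Moser survey), and in substance it is correct: the fundamental theorem of calculus for $C^1$ curves, continuity of $f$ from the integral representation, symmetry of $d^2f$ via the second-order difference $\Delta(s,t)$ and iterated integrals, and the Taylor formula by induction with integration by parts are precisely the canonical proofs in Michal--Bastiani/Keller calculus.

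One caveat concerns your ordering of (1) before (2). Homogeneity of $df(x)$ is indeed immediate from the defining limit, but additivity is the one step where your sketch is not airtight: the device of computing $\frac{d}{du}g(u,u)\big|_{u=0}$ ``in two ways'' either invokes a chain rule --- which presupposes linearity of the derivative, hence is circular --- or silently requires a mean-value-type estimate; a bare ``exchange of limits'' justified by continuity of $df$ does not suffice. The cited references avoid this by proving the FTC \emph{first} (it needs no linearity, only differentiation of curves) and then deriving additivity from the integral representation: $f(x+uh_1+uh_2)-f(x+uh_1)=u\int_0^1 df(x+uh_1+tuh_2)(h_2)\,dt$ by homogeneity, so dividing the full difference quotient by $u$ and letting $u\to 0$ gives $df(x)(h_1+h_2)=df(x)(h_1)+df(x)(h_2)$ by joint continuity of $df$. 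Alternatively you may keep your order by arguing through Hahn--Banach and the scalar mean value theorem. A second, minor point: the Riemann integral of a continuous curve exists a priori only in the \emph{completion} of $Y$, since $Y$ is not assumed sequentially complete; one should note that the FTC identity exhibits the integral as a difference of elements of $Y$, hence it lies in $Y$ itself. With the ordering repaired and this remark added, your outline is a faithful version of the proof the paper points to.
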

Now, following \cite{Neeb} we shall introduce a notion of an infinite dimensional manifold.
Let $M$ be a Hausdorff topological space and $E$ a locally convex space. An $E$-chart of an open subset $U \subseteq M$ is a homeomorphism $\varphi:U \rightarrow \varphi(U) \subseteq E$ onto an open subset $\varphi(U)$ of $E$. We denote such a chart as a pair $(\varphi,U)$. Two charts $(\varphi,U)$ and $(\psi,V)$ are said to be smoothly compatible if the map
$\psi \circ \varphi^{-1}\Big|_{\varphi(U\cap V)}: \varphi(U\cap V ) \rightarrow \psi(U \cap V )$
is smooth.	\\
An $E$-atlas	of	$M$	is	a family	$(\varphi_i, U_i)_{i\in I}$	of	pairwise compatible $E$-charts of $M$ for which $\bigcup_i U_i=M$. Many of the objects used in differential geometry can be defined also in the infinite dimensional case. We start with the notion of a tangent space. 
\begin{df}
Let $a$ be an element of a locally convex vector space $X$. A kinematic tangent vector with foot point $a$ is a pair $(a, Q)$ with $Q\in X$. $T_a E \cong E$ is the space of all kinematic tangent vectors with foot point a. It consists of all derivatives $c'(0)$ at $0$ of smooth curves $c : \RR \rightarrow E$ with $c(0) = a$. The kinematic tangent space of a locally convex vector space $E$ will be denoted by $TE$ and the space of vector fields by $\Gamma(TE)$.
\end{df}
We use the term \textit{kinematic} since in the most general case this definition doesn't coincide with the definition of vector fields as derivations. Fortunately for the spaces considered in this paper this doesn't pose a problem.
\begin{df}
Let $M$ be a smooth manifold with the atlas $(\varphi_i, U_i)_{i\in I}$, where $\varphi_i:U_i\rightarrow E_i$. We consider the following equivalence relation on the disjoint union
\[
\bigcup\limits_{i\in I}U_i\times E_i \times \{i\}\,,
\]
\[
(x,v,i) \sim (y,w,j) \Leftrightarrow x = y\ \textrm{and}\ d(\ph_{ij})(\ph_j(x))w = v\,,
\]
where $\ph_{ij}$ are the transition functions. One denotes the quotient set by $TM$, the kinematic tangent bundle of $M$. 
\end{df}
Now we want to define differential forms on an infinite dimensional manifold. This turns out to be a problem, since there is no natural notion of a cotangent space (no natural topology on the dual of an infinite dimensional lcvs). There are alternative definitions of differential forms, but they are not equivalent. A detailed discussion of this problem is given in \cite{Michor} (VII.33). It turns out that only one of these notions is stable under Lie derivatives, exterior derivative, and the pullback. 
\begin{df}[\cite{Michor},VII.33.22]\label{33.22}
Let $M$ be a smooth infinite dimensional manifold. We will define
the space of differential forms on $M$ as:
\begin{equation*}
\Omega^k(M) \doteq \Ci(M \leftarrow L^k_\alte(TM,M \times \RR))\,.
\end{equation*}
Similarly, we denote by $\Omega^k(M;V) \doteq \Ci(M \leftarrow L^k_\alte(TM,M \times V))$
the space of differential forms with values in a locally convex vector space $V$.
\end{df}
%==========================
\subsection{Topologies and completions}\label{topo}
%==========================
In this section we shall give more details on the locally convex topologies appearing throughout the paper. Firstly we note that the configuration space $\E(M)$ can in our examples be equipped with a Fr\'echet topology, since it is just the space of smooth sections  $\E(M)=\Gamma(B)$ of some vector bundle $B\xrightarrow{\pi}M$ with a finite dimensional fiber $V$. The Fr\'echet topology is in this case generated by the family of seminorms:
\be
p_{K,m,a}(\ph)=\sup_{x\in K\atop |\alpha|\leq m}|\partial^\alpha\ph^a(x)|\,,
\ee
where $\alpha\in\NN^N$ is a multiindex and $K\subset M$ is a compact set. A set $B\subset\E(M)$ is bounded if $\sup_{\ph}\{p_{K,m,a}(\ph)\}<\infty$ for all seminorms $p_{K,m,a}$.
Let $\Bcal$ be the family of bounded sets in $\E(M)$. The \textit{strong topology} on the dual space $\E'(M)$ is defined by a family of seminorms: $p_B(T)\doteq\sup_{\ph\in B}\left<T,\ph\right>$, where $B\in \Bcal$, $T\in \E'(M)$, $\ph\in \E(M)$.

The space of compactly supported sections $\E_c(M)=\Gamma_c(E)$ can be equipped with a locally convex topology in a similar way. The fundamental system of seminorms is given by \cite{Sch0}:
\[
p_{\{m\},\{\epsilon\},a}(\ph)=\sup_\nu\big(\sup_{|x|\geq\nu,\atop |p|\leq m_\nu} \big|D^p\ph^a(x)\big|/\epsilon_\nu\big)\,,
\]
where $\{m\}$ is an increasing sequence of positive numbers going to $+\infty$ and $\{\epsilon\}$ is a decreasing one tending to $0$. This topology is no longer Fr\'echet. Nevertheless it possesses many nice properties, for example the spaces $\E(M)$, $\E_c(M)$, as well as their strong duals $\E'(M)$, $\E'_c(M)$, are reflexive nuclear spaces.  

Now we can finally address the question, what should be the natural topology on $\F(M)\subset\Ci(\E(M),\RR)$. There are many choices possible. For example one can take the topology of uniform convergance of each derivative on compact sets. This is discussed in details in \cite{Michor,Neeb,Ham}, but in our context the pointwise convergence of all the derivatives would suffice. 
%Note that the image of $\langle S'_M,{}_{\bullet}\rangle$ applied on smooth vector fields consists of functionals whose first functional derivative is a smooth test section at each point, i.e. $F^{(1)}(\ph)\in\E_c(M)$. We want to choose a topology on $\F(M)$ that gives us control of this property. 
This is the initial topology with respect to the mappings:
\be\label{tauF}
\F(M)\ni F\mapsto F^{(n)}(\ph)\in\Gamma'(M^{n},V^{\otimes n})\quad n\geq0\,,
\ee
where $\ph$ runs through all elements of $\E(M)$ and spaces $\Gamma'(M^{n},V^{\otimes n})$ are strong duals. We denote this initial topology on $\F(M)$ by $\tau$. Using the result of Pietsch (\cite{Pie}, 5.2.3) we conclude that this topology is nuclear.
%The space of compactly supported sections $\Gamma_c(M^{n},V^{\otimes n})$ is dense in $\Gamma'(M^{n},V^{\otimes n})$ for each $n$ and since $\F(M)$ contains in particular functionals with all smooth derivatives, we can conclude $\F(M)$ is dense with respect to $\tau$ in the space of functionals for which the first derivative at each point is in $\E_c(M)$ and higher derivatives are compactly supported distributions in $\Gamma'(M^{n},V^{\otimes n})$.

Now we define the notion of locality for functions on $\E(M)$ with values in a locally convex vector space of distributional sections $\W(M):=\prod\limits_{n=0}^\infty\Gamma'(M^n,W_n)$ (equipped with the strong topology) for an arbitrary finite dimensional vector spaces $W_n$. In particular $\CE(M)$ and  $\fG(M)$ are embeded in  spaces of this form for $W_n=\Lambda^n g$ and $W_n=\bigoplus\limits_{k+l+m=n}\Lambda^k g\otimes \Lambda^l V\otimes S^mg$.
%We start with the Chevalley-Eilenberg complex. As noted before, its elements are the smooth maps $\E(M)\rightarrow \Lambda \frakgo'(M)$ such that the first functional derivative has an empty wave front set. Let $F\in\E(M)$ be such a map. It's 
The $k$-th functional derivative of an element $F\in\Ci(\E(M),\W(M))$, at each point $\ph\in\E(M)$ is  vector valued compactly supported distribution (see \cite{Sch0,Sch1,Sch2,Kom} and Appendix \ref{vvalued}), i.e. an element of:
\[
F^{(k)}(\ph)\in\Gamma'(M^k,V^{\otimes k})\widehat{\otimes}\W(M)\cong \prod\limits_{n=0}^\infty\Gamma'(M^{k+n},V^{\otimes k}\otimes W_n)\,.
\]
On the right hand side of the above equation we have a direct product of distributions on $M$. We say that $F$ is local if all those distributions have their supports on the thin diagonal and their wavefront sets are orthogonal to the tangent bundles of the thin diagonals $\Delta^p(M)\doteq\left\{(x,\ldots,x)\in M^p:x\in M\right\}$, $p=n+k$, considered as subsets of the tangent bundles of $M^p$ (compare with (\ref{WFloc})). The subspace of $\Ci(\E(M),\W(M))$ consisting of all the local functions is denoted by $\Ci_\loc(\E(M),\W(M))$. The multilocal functions  $\Ci_\ml(\E(M),\W(M))$ are again sums of finite pointwise products of the local ones.
From the point of view of the Batalin-Vilkovisky formalism the topology $\tau$ is very well behaving and allows to perform the Koszul-Tate resolution for very general functional spaces. In principle one can replace the multilocal functions with functions that have first functional derivative as a smooth test section at each point of the configuration space. In this case the BV-construction can be performaed exactly as presented in sections \ref{scalar}-\ref{gaugefixing}. Unfortunately this class of functions is not stable
with respect to the Peierls bracket defined in Section \ref{Peierls}. To obtain a space that is closed under this bracket we need to extend $\F(M)$ to more singular objects. One also has to replace the topology $\tau$ with a different one. Since we need to have a control over the wave front sets, a natural choice is a topology introduced in \cite{BDF} and generalized for vector-valued functionals in \cite{Rej}.  
Let $\Xi_n$ denote the open cone $\{(x_1,...,x_n,k_1,...k_n)| (k_1,...k_n)\notin (\overline{V}_+^n \cup \overline{V}^n_-)\}$. Let $\Gamma'_{\Xi_n}(M^n,V^{\otimes n})$ be the subspace of $\Gamma'(M^n,V^{\otimes n})$ consisting of distributions with wave front set contained in the open cone $\Xi_n$.  Let now $C_n\subset \Xi_n$ be a closed cone contained in $\Xi_n$. We introduce (after \cite{Hoer,BaeF,BDF}) the following family of seminorms on $\Gamma'_{C_n}(M^n,V^{\otimes n})$: $p_{n,\phi,\tilde{C},k} (T) = \sup_{\xi\in V}\{(1 + |\xi|)^k |\widehat{\phi T}(\xi)|\}$,
where the index set consists of $(n,\phi,\tilde{C},k)$ such that $k\in \NN_0$, $\phi\in \D(U)$ and $\tilde{C}$ is a closed cone in $\RR^n$ with $(\supp ( \phi ) \times \tilde{C}) \cap C_n = \emptyset$. These seminorms, together with the seminorms of the weak topology provide a defining system for a locally convex topology deno ted by $\tau_{C_n}$. To control the wave front set properties inside open cones, we take an inductive limit. It can be shown that, to form this inductive limit one can choose the family of closed cones contained inside $\Xi_n$ to be countable. The resulting topology will be denoted by $\tau_{\Xi_n}$. The space of local vector-valued functionals
 can be now equipped with the initial topology analog to $\tau$, but with the topologies $\tau_{\Xi_n}$ on all the distribution and function spaces. We denote this topology by $\tau_{\Xi}$. It can be shown \cite{BDF} that each of the topologies $\tau_{C_n}$ is nuclear so $\tau_{\Xi}$ is nuclear as well. We define microcausal functionals $\F_\mc(M)$, as smooth compactly supported functions on $\E(M)$, for which the functional derivatives at each point are compactly supported distributions satisfying the microlocal spectrum condition:
 \be\label{mlsc}
 \WF(F^{(n)}(\ph))\subset \Xi_n, \qquad F\in\F_\mc(M)\,.
 \ee
Multilocal functionals are dense in $\F_\mc(M)$ with respect to the topology $\tau_{\Xi}$. In a similar way we define microcausal vector-valued functions. We say that $F\in\Ci(\E(M),\W(M))$ is microcausal if for all $\ph\in\E(M)$:
\[
F^{(k)}(\ph)\in\prod\limits_{n=0}^\infty\Gamma'_{\Xi_{n+k}}(M^{n+k},V^{\otimes k}\otimes W_n)\,.
\]
In particular the extended BV graded algebra $\BV_\mc(M)$ is defined to be a space of microcausal vector-valued functions. Now we want to extend the BV to those more singular objects.
Since the Lagrangian $L_M(f)$ is a local functional and its functional derivative is a smooth test section, the Koszul operator can be extended from multilocal vector fields to $\V_\mc(M):=\Ci_\mc(\E(M),\Gamma'_{\Xi_1}(M,V))$ and the resolution of  $\F_\mc(M)$ in the scalar case is provided by the differential graded algebra $\La\V_\mc(M)$. Similar reasoning applies also to the case when symmetries are present. The extended Chevalley-Eilenberg complex $\CE_\mc(M)$ consists of microcausal vector-valued functions with $W_n=\Lambda^n g$. Since the map $\partial_{\rho_M(.)}F$ is an element of 
 $\CE_\mc(M)$, for all $F\in\F_\mc(M)$, we can conclude that:
 \[
 H^0(\CE_\mc(M),\gamma)=\F^\inv_\mc(M)\,.
 \]
The full BV complex is equipped with the topology $\tau_{\Xi}$ and since multilocal functional lie dense in $\BV_\mc(M)$, the differential  $s$ can be extended to the full complex by continuity.
From the above discussion it follows that $H^0(\BV_\mc(M),s)$ is the space of microcausal gauge invariant functionals on-shell. We want to point out however, that the antibracket itself is not well defined on the whole $\BV_\mc(M)$. This is because the commutator of vector fields $\V(M)$ can be extended only to those elements of the space $\V_\mc(M)$, that have smooth first derivative. 
%=======================================
\subsection{Distributions with values in a locally convex vector space}\label{vvalued}
%=======================================
In this section we describe in details the theory of vector valued distributions in case when the vector space in question is a graded algebra $\A(M)$. For simplicity of notation we shall denote the graded product of $\A(M)$ by $\wedge$. We start with a definition \cite{Sch1}:
\begin{df}\label{vector}
Let $X$ be a locally convex topological vector space with a topology defined by a separable family of seminorms $\{p_\alpha\}_{\alpha\in I}$. We say that $T$ is a distribution on $\RR^n$ with values in $X$ if it is a continuous linear mapping from $\Dcal$ to $X$, where $\Dcal$ denotes the space of compactly supported functions on $\RR$.
\end{df}
In \cite{Sch1,Sch2} L. Schwartz requires the topological vector space $X$ to be quasicomplete.
\begin{df}[\cite{Jar}, 3.2]
A subset $U$ of a topological vector space $X$ is called \textit{complete} (\textit{sequentially complete}) if every Cauchy net (sequence) converges in $U$. We say that $X$ is quasi-complete if every closed bounded subset of $X$ is complete.
\end{df} 
The property of (sequential) completeness and quasi-completeness is inherited by the infinite direct products and infinite direct sums \cite[3.3.5]{Jar}.
\begin{df}
Let $E,F$ be Hausdorff lcvs and $\Bcal$ be the family of bounded sets of the completion of $E$ (bornology). Let $\tau_\Bcal$ be the topology of uniform convergence on bounded sets it induces on $L(E,F)$. We say that $E$ has the (sequential) approximation property if one of the following equivalent conditions holds:
\begin{enumerate}
\item $E'\otimes F$ is (sequentially) dense in $(L(E,F),\tau_\Bcal)$ for every $F$\,,
\item $E'\otimes E$ is (sequentially) dense in $(L(E,E),\tau_\Bcal)$\,,
\item $\1_E$ is the $\tau_\Bcal$-limit of some (sequence) net in $E'\otimes E$.
\end{enumerate}
\end{df}
The approximation property is also inherited by a direct product of a family of lcvs that are Hausdorff \cite[18.2.4]{Jar}. The spaces $\E(M)$, $\E_c(M)$, as well as their strong duals $\E'(M)$, $\E'_c(M)$, are complete and they have the approximation property (\cite{Jar,Sch1}) as well as the sequential approximation property\footnote{Unlike the approximation property used by L. Schwarz \cite{Sch1,Sch2}, the sequential approximation property doesn't follow from nuclearity. The first counterexample is due to \cite{Dub}, see also \cite{Vogt}.}. In \cite{Sch1,Sch2} it is shown that
for the quasi-complete $X$, the space of $X$-valued distributions with the topology of uniform convergance on compact sets is topologically isomorphic to the completed injective tensor product of $\Dcal'$ and $X$. It was discussed in \cite{Kom} that these results can be also applied to the situation when all the spaces are sequentially complete. In this paper we always deal with nuclear sequentially complete spaces $X$, so we can identify the space of distributions with values in $X$ with the sequentially completed tensor product $\Dcal'\widehat{\otimes}X$.

The notion of vector-valued distribution enables us to formulate the classical field theory involving anticommuting fields in a mathematically elegant way. Note that the map (\ref{d1}) can be treated as an element of ${\frakgo}'(M)\widehat{\otimes}\A(M)$, i.e. a distribution with values in a graded algebra. One can generalize all well known operations like convolution, Fourier transform and pullback to such objects \cite{Hoer,Sch1,Sch2}. For the simplicity of notation we provide here definitions for the case $\Dcal'\widehat{\otimes}\A$.
\begin{df}
Let  $T=t\otimes f$ and $\phi=\varphi\otimes g$, where $f,g\in \A$, $t\in\Dcal'$ and $\varphi\in\Dcal$. We have an antisymmetric bilinear product on $\A$ defined as: $m_a(T,S)\doteq T\wedge S$. We define the convolution of $T$ and $\phi$ by setting:
\begin{equation}
(T*\phi)(x)\doteq T(\phi(x-.))\otimes m_a(f,g)\,.
\end{equation}
The extension by the sequential continuity to $\Dcal'\hat{\otimes}\A$ defines a convolution of a vector-valued distribution with a vector-valued function.
\end{df}
\begin{df}
Let  $T=t\otimes f$ and $S=s\otimes g$, where $f,g\in \A$, $t\in\Ecal'(\RR^2)$ and $s\in\Dcal'$. We define the convolution of $T$ and $S$ by setting:
\begin{equation}
T*S\doteq \int t(.,y)s(y)dy\otimes m_a(f,g)\,,
\end{equation}
This expression is well defined by \cite[4.2.2]{Hoer} and can be extended by continuity to arbitrary $S\in\Dcal'\hat{\otimes}\A$,  $T\in\Ecal'\hat{\otimes}\A$.
\end{df}
\begin{df}
In a similar spirit we define the evaluation of $T=t\otimes f$ on $\phi=\varphi\otimes g$,  by:
\begin{equation}
\left<T,\phi\right>\doteq \left<t,\varphi\right>\otimes m_a(f,g)\,,
\end{equation}
where $f,g\in \A$, $t\in\Dcal'$ and $\varphi\in\Dcal$. Also this can be extended by continuity to  $\Dcal'\hat{\otimes}\A$.
\end{df}
Let  $\mathscr{S}$ denote the space of rapidly decreasing functions, i.e. such that: $\sup_x|x^\beta\partial^\alpha\phi(x)|<~\infty$ for all multi-indices $\alpha,\beta$.
\begin{df}
Let $T\in\mathscr{S}'\hat{\otimes}\A$. We define $\hat{T}\in\mathscr{S}'\hat{\otimes}\A$, the Fourier transform of $T$ as:
\begin{equation}
\hat{T}(\phi)=T(\hat\phi)\qquad\phi\in\mathscr{S}\,.
\end{equation}
\end{df}
Also the notion of the wave front set \cite{Hoer} can be extended to distributions with values in a lcvs. The case of Banach spaces was already treated in detail in \cite{Ko}.
\begin{df}
Let $\{p_\alpha\}_{\alpha\in A}$ be the family of seminorms generating the locally convex topology on $\A$. Let  $T\in\mathscr{S}'\hat{\otimes}\A$. A point $(x,\xi_0)\in T^*\RR^{n}\setminus 0$ is not
in $\textrm{WF}(T)$, if and only if $p_\alpha(\widehat{\phi u}(\xi))$
is fast decreasing as $|\xi|\rightarrow\infty$ for all $\xi$ in an open
conical neighbourhood of $\xi_0$, for some $\phi\in \Dcal$
with $\phi(x)\neq 0$, $\forall \alpha\in A$.
\end{df}
With the notion of the wave front set we can define a "pointwise product" of two distributions $T,S\in\Dcal'\hat{\otimes}\A$ by a straightforward extension of \cite[8.2.10]{Hoer}:
\begin{prop}
Let $T,S\in\Dcal'\hat{\otimes}\A$, $U\in M$ (open). The product $T\cdot S$ can be defined as the pullback of $m_a\circ(T\otimes S)$ by the diagonal map $\delta:U\rightarrow U\times U$ unless $(x,\xi)\in\textrm{WF}(T)$ and $(x,-\xi)\in\textrm{WF}(S)$ for some $(x,\xi)$.
\end{prop}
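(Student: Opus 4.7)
The plan is to mirror the scalar argument of \cite[Thm.~8.2.10]{Hoer} step by step in the vector-valued setting, taking advantage of the fact that $\A=\A(M)$ is nuclear and sequentially complete, so every construction used on scalar distributions extends canonically to $\Dcal'\hat\otimes\A$.

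First, I would form the exterior tensor product $T\otimes S\in\Dcal'(U\times U)\hat\otimes(\A\hat\otimes\A)$. Sequential completeness and nuclearity of $\A$ identify this iterated completed tensor product with $\Dcal'(U\times U)\hat\otimes(\A\hat\otimes\A)$, and continuity of the graded product $m_a:\A\hat\otimes\A\to\A$ (a consequence of the joint continuity of $\wedge$ on $\A$) then yields $m_a\circ(T\otimes S)\in\Dcal'(U\times U)\hat\otimes\A$. The key analytic estimate I need is the vector-valued analogue of the wave-front set bound for tensor products,
\[
\WF(T\otimes S)\subset(\WF(T)\times\WF(S))\cup((\supp T\times\{0\})\times\WF(S))\cup(\WF(T)\times(\supp S\times\{0\})).
\]
To verify this in our setting, I would test with a product cutoff $\phi(x)\psi(y)$ and apply the seminorms $p_\alpha$ generating the topology of $\A$ to the Fourier transform of $(\phi\otimes\psi)(T\otimes S)$. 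Because $p_\alpha(\widehat{(\phi\otimes\psi)(T\otimes S)}(\xi,\eta))\le p_\alpha(\widehat{\phi T}(\xi))\,p_\alpha(\widehat{\psi S}(\eta))$ (after passing through $m_a$ with its continuity estimate $p_\alpha(a\wedge b)\le C_\alpha\,q_\alpha(a)q_\alpha(b)$), the standard scalar decay argument yields the inclusion above, now for the $\A$-valued wave-front set defined in the preceding definition. Since $m_a$ is continuous, applying it does not enlarge the wave front set, so the same inclusion holds for $\WF(m_a\circ(T\otimes S))$.

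Next, I would invoke the vector-valued pullback theorem: for a smooth map $f:X\to Y$ and a distribution $u\in\Dcal'(Y)\hat\otimes\A$ whose wave front set avoids the conormal $N_f=\{(f(x),\eta):f^*\eta=0\}$, the pullback $f^*u\in\Dcal'(X)\hat\otimes\A$ is well defined, with $\WF(f^*u)\subset f^*\WF(u)$. This is proved in the scalar case in \cite[Thm.~8.2.4]{Hoer} by the oscillatory integral representation $\chi(x)(f^*u)(x)=(2\pi)^{-n}\int e^{if(x)\cdot\xi}\hat{u\psi}(\xi)\,d\xi$ for suitable cutoffs $\chi,\psi$; in our setting the very same formula makes sense as a Bochner-type integral in $\A$, because $\hat{u\psi}$ takes values in $\A$, the estimates on $p_\alpha(\hat{u\psi}(\xi))$ are identical to the scalar ones by the definition of the vector-valued wave front set, and $\A$ is sequentially complete so the integral converges.

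Specializing to the diagonal embedding $\delta:U\to U\times U$, $\delta(x)=(x,x)$, the conormal is $N_\delta=\{(x,x;\xi,-\xi):(x,\xi)\in T^*U\}$. Combining the two inclusions above, the hypothesis $N_\delta\cap\WF(m_a\circ(T\otimes S))=\emptyset$ reduces exactly to the stated condition: it must not happen that $(x,\xi)\in\WF(T)$ and $(x,-\xi)\in\WF(S)$ simultaneously (the degenerate cases with $\xi=0$ are excluded because wave front sets live in the cotangent bundle minus the zero section). Under this assumption the pullback $\delta^*(m_a\circ(T\otimes S))\in\Dcal'(U)\hat\otimes\A$ exists and is by definition $T\cdot S$.

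I expect the main obstacle to be the verification of the wave-front-set estimate for the tensor product in the $\A$-valued context, since one has to carefully propagate the seminorm estimates through the continuous multiplication $m_a$; once that is in place, the remainder of the argument is a routine transcription of Hörmander's scalar proof, made legitimate by the nuclearity and sequential completeness of $\A$ established earlier in the appendix.
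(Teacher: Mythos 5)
Your proposal is correct and takes essentially the same route the paper intends: the paper offers no proof of this proposition beyond declaring it a straightforward extension of H\"ormander's Theorem 8.2.10, and your argument---the vector-valued tensor-product wave-front bound, continuity of $m_a$, and the pullback theorem applied to the diagonal map, whose conormal $\{(x,x;\xi,-\xi)\,:\,\xi\neq 0\}$ reproduces exactly the stated exclusion condition---is precisely that extension, carried out with the seminorm-based wave front set defined just before the proposition. The only cosmetic slip is your continuity estimate $p_\alpha(a\wedge b)\le C_\alpha\,q_\alpha(a)\,q_\alpha(b)$, which for a general locally convex algebra should allow different seminorm indices on the right-hand side, i.e. for each $p_\alpha$ there exist $q_\beta$, $q_\gamma$ and $C$ with $p_\alpha(a\wedge b)\le C\,q_\beta(a)\,q_\gamma(b)$; nothing in your argument depends on this.
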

Obviously we have:  $T\cdot S=-S\cdot T$, whenever these expressions are well defined. In the paper we also use a more suggestive notation: $T\cdot S\doteq\left<T,S\right>$.
%\newpage

\end{document}